\documentclass[journal,twocolumn,a4paper]{IEEEtran}
\IEEEoverridecommandlockouts
\usepackage{cite}
\usepackage{amsmath,amssymb,amsfonts}
\usepackage{amsthm}
\usepackage{amsbsy}
\usepackage{bm}
\usepackage{booktabs}
\usepackage{tabularx}
\usepackage{multicol}
\usepackage{multirow}
\usepackage{caption}
\usepackage{subcaption}
\usepackage{graphicx}
\usepackage{textcomp}
\usepackage{color}
\usepackage{xcolor}
\usepackage{algorithm}
\usepackage{algpseudocode}
\algblock{Output}{EndOutput}
\algnotext{EndOutput}
\algnewcommand{\Initialize}[1]{%
	\State \textbf{Initialize:}
	\Statex \hspace*{\algorithmicindent}\parbox[t]{.8\linewidth}{\raggedright #1}
}
\usepackage[outdir=./]{epstopdf}
\usepackage{enumitem}
\newlist{steps}{enumerate}{1}
\setlist[steps, 1]{label = Step \arabic*:}
\allowdisplaybreaks
\def\BibTeX{{\rm B\kern-.05em{\sc i\kern-.025em b}\kern-.08em
\usepackage{hyperref}
    T\kern-.1667em\lower.7ex\hbox{E}\kern-.125emX}}

\newtheorem{proposition}{Proposition} 
\theoremstyle{remark}
\newtheorem{remark}{Remark}

\newtheorem{definition}{Definition}

\newcommand{\tUR}{UR}
\newcommand{\tUT}{UT}
\newcommand{\UR}{{\mathrm{r}}}
\newcommand{\UT}{{\mathrm{t}}}
\newcommand{\MN}{{\mathrm{c}}}
\newcommand{\nr}{{n}}
\newcommand{\nrl}{n'}
\newcommand{\NR}{{N_\mathrm{r}}}

\newcommand{\NT}{{N_\mathrm{t}}}
\newcommand{\nm}{p}
\newcommand{\NM}{N}
\newcommand{\UTUR}{\mathrm{tr}} 
\newcommand{\URUT}{\mathrm{rt}}

\newcommand{\yrtest}{\Tilde{\mathbf{y}}_{\nr}}
\newcommand{\ymtest}{\Tilde{\mathbf{y}}_{\nm}}
\newcommand{\phiu}{\bm{\varphi}_{\nr}}
\newcommand{\phiuh}{\bm{\varphi}_{\nr}^{H}}
\newcommand{\phid}{\bm{\Phi}_{\UT}}
\newcommand{\phidh}{\bm{\Phi}_{\UT}^H}
\newcommand{\esp}{\mathbb{E}}
\newcommand{\bmest}{\mathbf{b}_{\nm}}
\newcommand{\omegam}{\Tilde{\bm{\omega}}_{\nm}}
\newcommand{\Frj}{\mathbf{F}_\UR^\mathrm{J}}
\newcommand{\Fmj}{\mathbf{F}^{\mathrm{J}}}

\newcommand{\MSP}{\mathbf{MSP}}
\newcommand{\Wj}{\mathbf{W}_m^{\mathrm{J}}}

\algrenewcommand\algorithmicrequire{\textbf{Input:}}
\algrenewcommand\algorithmicensure{\textbf{Output:}}




\title{How to Proactively Monitor Untrusted Communications with Cell-Free Massive MIMO?}
\author{Isabella W. G. da Silva,~\IEEEmembership{Student Member,~IEEE,} Zahra~Mobini,~\IEEEmembership{Member,~IEEE}, Hien~Quoc~Ngo,~\IEEEmembership{Fellow,~IEEE}, Hyundong~Shin,~\IEEEmembership{Fellow,~IEEE},~and~Michail~Matthaiou,~\IEEEmembership{Fellow,~IEEE}
\thanks{This work is a contribution by Project REASON, a UK Government funded project under the Future Open Networks Research Challenge (FONRC) sponsored by the Department of Science Innovation and Technology (DSIT). It was also supported by the U.K. Engineering and Physical Sciences Research Council (EPSRC) (grants No. EP/X04047X/1 and EP/X040569/1). The work of Z.~Mobini and  H.~Q.~Ngo was supported by the U.K. Research and Innovation Future Leaders Fellowships under Grant MR/X010635/1. The work of M. Matthaiou was supported by the European Research Council (ERC) under the European Union’s Horizon 2020 research and innovation programme (grant agreement No. 101001331). The work of I. W. G. da Silva, H. Q. Ngo and M. Matthaiou was also supported by a research grant from the Department for the Economy
Northern Ireland under the US-Ireland R\&D Partnership Programme. The work of  H. Shin was supported by the National Research Foundation of Korea (NRF) grant funded by the Korean government (MSIT) under RS-2025-00556064 and by the MSIT (Ministry of Science and ICT), Korea, under the ITRC (Information Technology Research Center) support program (IITP-2025-2021-0-02046) supervised by the IITP (Institute for Information \& Communications Technology Planning \& Evaluation). (\textit{Corresponding authors: Hien Quoc Ngo; Hyundong Shin}.)
}
\thanks{I. W. G. da Silva,  H.~Q.~Ngo, and  M. Matthaiou are with the Centre for Wireless Innovation (CWI), Queen’s University Belfast, BT3 9DT Belfast, U.K. (e-mails: \{iwgdasilva01,  hien.ngo, m.matthaiou\}@qub.ac.uk). H.~Q.~Ngo is also with the Department of Electronic Engineering, Kyung Hee University, Yongin-si, Gyeonggi-do 17104, Republic of Korea.}
\thanks{Z.~Mobini is with the Centre for Wireless Innovation (CWI), Queen’s University Belfast, BT3 9DT Belfast, U.K., and also   with the Department of Electrical and Electronic Engineering, The University of Manchester, Manchester M13 9PL, U.K. (e-mail: zahra.mobini@qub.ac.uk).}

\thanks{H.~Shin is with the Department of Electronics and Information Convergence Engineering,
Kyung Hee University,
1732 Deogyeong-daero, Giheung-gu,
Yongin-si, Gyeonggi-do 17104, Republic of Korea
(e-mail: hshin@khu.ac.kr).}

\thanks{Parts of this paper were presented at the 2024 IEEE GLOBECOM~\cite{paperglobecom}.}}
\begin{document}

\maketitle

\begin{abstract}
This paper studies a cell-free massive multiple-input multiple-output (CF-mMIMO) proactive monitoring system in which multiple multi-antenna monitoring nodes (MNs) are assigned to either observe the transmissions from an untrusted transmitter (\tUT) or to jam the reception at the untrusted receiver (\tUR). We propose an effective channel state information (CSI) acquisition scheme for the monitoring system. In our approach, the MNs leverage the pilot signals transmitted during the uplink and downlink phases of the untrusted link and estimate the effective channels corresponding to the \tUT~and \tUR~via a minimum mean-squared error (MMSE) estimation scheme. We derive new spectral efficiency (SE) expressions for the untrusted link and the monitoring system. For the latter, the SE is derived for two CSI availability cases at the central processing unit (CPU); namely case-1: imperfect CSI knowledge at both MNs and CPU, case-2: imperfect CSI knowledge at the MNs and no CSI knowledge at the CPU. To improve the monitoring performance, we propose a novel joint mode assignment and jamming power control optimization method to maximize the monitoring success probability (MSP) based on the Bayesian optimization framework. Numerical results show that (a) our CF-mMIMO proactive monitoring system relying on the proposed CSI acquisition and optimization approach significantly outperforms the considered benchmarks; (b)   the MSP performance of our CF-mMIMO proactive monitoring system is greater than $0.8$, regardless of the number of antennas at the untrusted nodes or the precoding scheme for the untrusted transmission link.




\end{abstract}
\begin{IEEEkeywords}
Cell-free massive MIMO, channel estimation, imperfect CSI, power control, 
proactive monitoring system. 
\end{IEEEkeywords}

\section{Introduction}\label{sec:Introduction}
Due to the ever-increasing number of devices and wireless capabilities and applications, data security against illegitimate users is a critical issue for the sixth-generation (6G) of wireless communications~\cite{10336902, 10684238}. To this end, physical-layer security (PLS) techniques, such as jamming~\cite{Zahra:TIFS:2019} and security-based precoding schemes~\cite{5580113}, are promising solutions to improve the secrecy of wireless communications from malicious attacks. Conventionally, PLS techniques are employed to preserve the confidentiality of the wireless communication assuming that a legitimate communication is being eavesdropped by an unauthorized device. However, with the advancement of infrastructure-free mobile communications, such as device-to-device (D2D) and mobile ad-hoc communications, new security challenges arise since malicious users may misuse these infrastructure-free networks to perform illegal activities, commit crimes, and jeopardize public safety. As a result, the need for novel wireless monitoring (a.k.a surveillance) approaches, aiming to eavesdrop on and/or intercept suspicious/untrusted messages through legitimate monitors, has gained sparkly attention recently~\cite{8014299,8726325}.

 In classical eavesdropping scenarios, the primary metric of security, known as \textit{secrecy capacity}, represents the maximum data rate at which information can be reliably received by a legitimate user without leakage of any valuable data to a potential eavesdropper~\cite{Zahra:TIFS:2019,6772207}. Analogously, in wireless monitoring systems, the channel conditions of the monitoring link should be better than that of the untrusted link to ensure that the suspicious message can be reliably decoded. This condition is hard to satisfy in several practical scenarios when a passive monitoring approach is considered, especially if the monitor is placed far away from the suspicious devices. Accordingly, to overcome this limitation, proactive monitoring has been proposed in~\cite{7321779} and~\cite{7880684}, aiming to improve the monitoring performance by cognitively sending jamming signals to degrade the channel conditions of the untrusted link while overhearing (eavesdropping) the suspicious message. 

 Several works have investigated proactive monitoring in a wide range of communication scenarios, as unmanned aerial vehicles (UAV) systems~\cite{ref:Zahra_proactive_UAV,10238835}, cognitive radio networks~\cite{9681707}, integrated sensing and communications (ISAC)~\cite{paperzonghan}, reconfigurable intelligent surface (RIS)-aided systems~\cite{9884994}, and fluid antenna systems (FAS)~\cite{10522668}, and for more complex scenarios with multiple untrusted links in~\cite{9673103,10015068,10384408}. For instance, in~\cite{10238835}, Guo \textit{et al.} considered a secure UAV-aided system where several legitimate UAVs collaboratively eavesdrop on the communication of suspicious links comprised of UAV transmitters and ground receivers. In~\cite{9681707}, the authors studied cooperative cognitive radios for proactive monitoring by considering that the secondary users help the primary users to eavesdrop on the untrusted link. In~\cite{paperzonghan}, proactive monitoring was exploited in an ISAC network, in which untrusted access points (APs) intend to illegally acquire the location of a sensed target.  
 RISs and FAS have also been investigated to enhance the monitoring performance for wireless monitoring systems~\cite{9884994,10522668}. Specifically, in~\cite{9884994},  an RIS was employed to improve the observing channel, whereas in~\cite{10522668}, a fluid antenna's legitimate monitor is considered. In addition, a number of works have investigated the scenario of proactive monitoring with multiple untrusted links. In~\cite{9673103}, Xu and Zhu optimized the average successful eavesdropping probability and the average eavesdropping rate of a scenario where numerous untrusted links are eavesdropped by one full-duplex (FD) monitor that is assumed to send jamming or constructive signals to the untrusted links. The authors also considered that the monitors were restricted to a quality-of-service (QoS) degradation constraint to avoid it being discovered by the untrusted links. Furthermore,~\cite{10015068} evaluated the relative eavesdropping rate considering that the suspicious links utilized PLS wiretap coding to protect their communication from being eavesdropped by the legitimate monitor. In~\cite{10384408}, the suspicious links were assumed to transmit jamming signals to defend against proactive monitoring. The authors formulated a Stackelberg game approach to track the interactions between the monitor and the suspicious user acting as a jammer, which was shown to enhance the successful eavesdropping probability at the monitor. Finally, in~\cite{9829333}, proactive monitoring via a spoofing relay approach was investigated for a multiple-input multiple-output (MIMO) orthogonal frequency division multiplexing (OFDM) system with directional beamforming.  
   \begin{table*}[t]
\centering
\caption{Our contributions in contrast to the state-of-the art}
\begin{tabular}{|c|c|c|c|c|c|c|}
\hline
Feature                       &\cite{9829333} &\cite{paperzahra}   &\cite{9381240}   &\cite{10032289}   &\cite{9709525}   & \textbf{our work} \\ \hline
CSI acquisition               &  &  & \checkmark &  &\checkmark  &   \checkmark                \\ \hline
Multiple MNs             &  &\checkmark &  &  &  &  \checkmark                 \\ \hline
Multi-antenna untrusted nodes &\checkmark & &  &\checkmark  &  &  \checkmark                 \\ \hline
Distributed MNs &  &\checkmark &  &  &  &  \checkmark                 \\ \hline
MSP maximization              & &   &   &   &   &    \checkmark               \\ \hline
\end{tabular}
\label{tab:contpaper}
\end{table*}

 Most of the aforementioned works rely on a single monitor, which should operate in FD mode to simultaneously observe the suspicious link and transmit jamming signals to interfere with the reception at the untrusted receiver (\tUR). To overcome this limitation, in a recent study~\cite{paperzahra}, the authors proposed a new proactive monitoring scheme, which exploits the cell-free massive MIMO (CF-mMIMO) infrastructure to enhance the monitoring capabilities within wireless surveillance frameworks. CF-mMIMO has been envisioned as one of the most promising technologies for 6G, as it overcomes the inherent intercell-interference  of traditional cellular systems and combines the concepts of distributed MIMO and massive MIMO, thereby availing of the benefits of both systems, such as high macro-diversity and ubiquitous coverage~\cite{7827017,10684260}. For wireless proactive monitoring systems, CF-mMIMO can enable a virtual FD mode by employing only half-duplex (HD) monitoring nodes (MNs). This way, the monitoring system becomes more cost-effective and less prone to self-interference. In particular, in~\cite{paperzahra}, the authors assumed two subsets of MNs: one subset for coherently observing the untrusted transmitters (\tUT s), and another subset to cooperatively jam the \tUR s. The observed signals at the MNs are forwarded to the central processing unit (CPU) for detecting the untrusted signals. It was observed that the proposed CF-mMIMO proactive monitoring system can significantly outperform the co-located mMIMO proactive monitoring system in terms of monitoring success probability (MSP).
 
Nevertheless, it is worth noting that a common assumption in the literature is the availability of global and perfect knowledge of channel state information (CSI) of the untrusted links at the MNs/CPU. However, in practical scenarios, only imperfect instantaneous CSI or statistical CSI is available at the MNs/CPU. In this context, Cheng \textit{et al.}~\cite{9381240} evaluated the impact of channel uncertainty on proactive monitoring. In this work, the authors formulated an optimization problem to enhance the monitoring performance under a covert constraint and showed that the uncertainty of the links can highly impact monitoring performance. In~\cite{10032289}, the authors studied the uplink of a multi-antenna proactive monitoring system with spatially correlated channels, considering imperfect instantaneous CSI for both the suspicious user at the multi-antenna suspicious receiver and the eavesdropping link at the monitor. Based on these considerations, they designed a transmit beamforming strategy for pilot and data jamming at the monitor to minimize the received signal-to-interference-plus-noise ratio (SINR) at the suspicious receiver. In~\cite{9709525}, Hu \textit{et al.} considered a simple setup of proactive monitoring, with single-antenna untrusted users and an FD dual-antenna legitimate monitor, and evaluated the effect of transmitting jamming signals to degrade the untrusted communication in the channel training phase of a suspicious communication. 


Motivated by the above, in this paper, we study a more advanced proactive monitoring system based on the CF-mMIMO technology, designed to monitor a pair of multi-antenna untrusted users through multiple multi-antenna MNs, under varying CSI knowledge assumptions at the CPU and MNs. More specifically, in the considered proactive monitoring system, the MNs are HD devices can operate in either observing or jamming mode. In particular, a group of MNs overhears the untrusted messages from the UT, while the remaining MNs send jamming signals to disrupt the UR. The monitoring performance of the CF-mMIMO proactive monitoring system is evaluated under two CSI knowledge cases;  namely \emph{1)} imperfect CSI knowledge at both the MNs and CPU \emph{2)} imperfect CSI knowledge at the MNs and no CSI knowledge at the CPU.  We would like to highlight that this work is an extension of our conference paper~\cite{paperglobecom}, where the CSI acquisition and SE expressions were first derived. In this work, we address the MSP maximization problem by jointly optimizing the MN mode assignment and transmit power control using Bayesian optimization, a method particularly effective for optimizing complex, hard-to-characterize objective functions. While Bayesian optimization has been primarily applied to hyperparameter tuning in machine learning algorithms~\cite{WU201926, pmlr-v54-klein17a}, its use in other optimization scenarios remains largely unexplored.  Moreover, in this work, we also provide a more detailed and complete performance evaluation of various system parameters on the MSP, and an asymptotic analysis as the number of MNs tends to infinity, which were not provided in~\cite{paperglobecom}. Table~\ref{tab:contpaper} benchmarks the contributions of our work in contrast to the state of the art. A more detailed description of the paper's are delineated below in a point-by-point format:
\begin{itemize}
    \item We propose a comprehensive transmission protocol for CF-mMIMO proactive monitoring systems, designed to monitor a pair of multi-antenna untrusted users using a simple yet effective CSI acquisition approach. In our CF-mMIMO proactive monitoring system,  we demonstrate that by leveraging pilot signals during both the uplink and beamforming training phases of the untrusted link, and by employing well-established and effective estimation techniques as the minimum mean-squared error (MMSE) technique, the MNs can estimate the effective channels to both the UT and UR and develop a  more practical monitoring system. Moreover, we evaluate the asymptotic behavior of the system as the number of MNs in the observing mode and jamming mode goes to infinity. For the former, we show that for a fixed number of MNs in jamming mode, as the number of MNs in observing mode increases, the effect of inter-MN interference and noise disappears at the CPU. For the latter, we show that we can downscale the necessary transmit power of each MN in jamming mode with the number of MNs in jamming mode, $M_\mathrm{J}$, by a factor of $\frac{1}{M_\mathrm{J}^2}$, while maintaining the SINR requirement for observing.  Unlike~\cite{paperzahra}, which assumed perfect CSI availability  for multiple untrusted single-antenna pairs, our analysis considers a more practical scenario involving a multi-antenna untrusted pair and realistic CSI acquisition via uplink and beamforming training. This leads to a fundamentally different system analysis and yields new insights into the asymptotic behavior, the impact of channel uncertainty, and interference management in CF-mMIMO surveillance systems.
    
    \item We derive analytical expressions for the spectral efficiencies (SEs) of the untrusted links and at the monitoring system, taking into account the proposed CSI acquisition and MMSE successive interference cancellation (MMSE-SIC) schemes. The MSP is then derived. Specifically, we investigate the monitoring performance of the CF-mMIMO proactive monitoring system under two CSI knowledge cases;  namely case-1: imperfect CSI knowledge at both the MNs and CPU, and case-2: imperfect CSI knowledge at the MNs and no CSI knowledge at the CPU. 
    It is worth emphasizing that, while the expressions derived in this work are tailored for a CF-mMIMO system, their versatility extends to various monitoring scenarios. 
    \item Numerical results validate the following:
(a)  The effectiveness of our CF-mMIMO proactive monitoring system, which relies on the proposed CSI acquisition and MSP optimization approach, that can significantly outperform benchmarks, such as random mode assignment and equal jamming power control; (b) The CF-mMIMO proactive monitoring system achieves higher MSP under case-1 compared to case-2 across almost all evaluated scenarios, particularly when the number of antennas at the untrusted nodes is high or when   maximum-ratio transmission (MRT) precoding scheme is applied at the UT. Moreover, with  zero-forcing (ZF) precoding, the MSP performance loss of case-2 relative to case-1 is reduced across various scenarios;  (c) The proposed CF-mMIMO proactive monitoring system  provides remarkable MSP gains compared to a co-located mMIMO-aided proactive monitoring system.
\end{itemize}

\textit{Notation:} Throughout this paper, bold upper-case letters denote matrices, whereas bold lower-case letters denote vectors; $(\cdot)^T$ and $(\cdot)^H$ stand for the matrix transpose and Hermitian transpose, respectively; $\mathbf{I}_M$ is the $M \times M$ identity matrix; $||\cdot||$ and $|\cdot|$ are the Euclidean-norm and the absolute value operator; $\det(\cdot)$  and $\esp\{\cdot\}$ is the  determinant and the expectation operator, respectively, while $\operatorname{Var}(a)$$\triangleq$$\esp\left\{|a - \esp\{a\}|^2\right\}$ is the variance operator. A circular symmetric complex Gaussian vector $\mathbf{z}$ with covariance matrix $\mathbf{C}$ is denoted by $\mathbf{z}\sim\mathcal{CN}(\mathbf{0}, \mathbf{C})$. For convenience, the main symbols used in this paper are presented in Table~\ref{tab:symbols}.

\begin{table}
\centering
\caption{List of notations}
\label{tab:symbols}
\begin{tabular}{l c p{0.75\linewidth}}
\toprule
\multicolumn{1}{l|}{$M$}              && Number of MNs\\
\multicolumn{1}{l|}{$\NR$}              & & Number of antennas at the \tUR\\
\multicolumn{1}{l|}{$\NT$}              & & Number of antennas at the \tUT\\
\multicolumn{1}{l|}{$\NM$}              && Number of antennas at the MNs\\
\multicolumn{1}{l|}{$\alpha_m$}                 && MN operation assignment\\
\multicolumn{1}{l|}{$\bm{\varphi}_n$}                 && Pilot  vector transmitted by the $n$th antenna of the \tUR\\
\multicolumn{1}{l|}{$\bm{\Phi}_\UT$}                 && Pilot  matrix transmitted by the \tUT\\
\multicolumn{1}{l|}{$\tau_\UR$}              && Uplink pilot length\\
\multicolumn{1}{l|}{$\tau_\UT$}                 && Downlink pilot length\\
\multicolumn{1}{l|}{$\rho_\UR$}        && Maximum normalized power at the \tUR\\
\multicolumn{1}{l|}{$\rho_\UT$}        && Maximum normalized power at the \tUT\\
\multicolumn{1}{l|}{$\mathbf{Y}_{\UTUR}$}              && Received pilot at the \tUT~from the \tUR\\
\multicolumn{1}{l|}{$\mathbf{Y}_{m\UR}$}              && Received pilot at the $m$th MN from the \tUR\\
\multicolumn{1}{l|}{$\mathbf{y}_{\UR}$}              && Received signal at the \tUR\\
\multicolumn{1}{l|}{$\mathbf{y}_{m}$}              && Received signal at the $m$th MN\\
\multicolumn{1}{l|}{$\mathbf{G}_{\UTUR}$}            && Channel matrix between the \tUR~and the \tUT\\
\multicolumn{1}{l|}{$\mathbf{G}_{m\UR}$}              && Channel matrix between the \tUR~and the $m$th MN\\
\multicolumn{1}{l|}{$\mathbf{G}_{\UT m}$}            && Channel matrix between the \tUT~and the $m$th MN\\
\multicolumn{1}{l|}{$\mathbf{G}_{mm'}$}              && Channel matrix between MN $m$ and MN $m'$\\
\multicolumn{1}{l|}{$\beta_{\UTUR}$}              && Large-scale fading coefficient between the \tUR~and the \tUT\\
\multicolumn{1}{l|}{$\beta_{m\UR}$}             && Large-scale fading coefficient between \tUR~and MN $m$\\
\multicolumn{1}{l|}{$\beta_{mm'}$}             && Large-scale fading coefficient between MNs $m$ and $m'$\\
\multicolumn{1}{l|}{$\mathbf{H}_{\UTUR}$}             && Small-scale fading between the \tUR~and the \tUT\\
\multicolumn{1}{l|}{$\mathbf{H}_{m\UR}$}               && Small-scale fading between the \tUR~and the $m$th MN\\
\multicolumn{1}{l|}{$\mathbf{W}$}             && Precoding matrix \\
\multicolumn{1}{l|}{$\mathbf{Y}_{\URUT}$}              && Received pilot matrix at the \tUR~from the \tUT\ \\
\multicolumn{1}{l|}{$\mathbf{Y}_{m\UT}$}                && Received pilot matrix at the $m$th MN from the \tUT\\
\multicolumn{1}{l|}{$\mathbf{x}_{\URUT}$}                 && Symbol vector intended to the \tUR\\
\multicolumn{1}{l|}{$\mathbf{x}^\mathrm{J}$}                 && Jamming symbol intended to the \tUR\\
\multicolumn{1}{l|}{$\mathbf{s}_{\UT}$}                && Transmitted signal from the \tUT\\
\multicolumn{1}{l|}{$\mathbf{s}_{m}^\mathrm{J}$}                  && Transmitted signal by the $m$th MN in jamming mode\\
\multicolumn{1}{l|}{${\mathbf{z}}_{\MN}$}                  && Aggregated received signal at the CPU\\
\multicolumn{1}{l|}{${\mathbf{V}}_m$}                  && MMSE combining matrix \\
\multicolumn{1}{l|}{$\rho_\mathrm{J}$}                && Transmit power at the MNs in jamming mode\\
\multicolumn{1}{l|}{$\mathrm{SE}_\UR$}                && Achievable SE at the \tUR\\
\multicolumn{1}{l|}{$\mathrm{SE}_\MN$}                && Achievable SE at the CPU\\
\multicolumn{1}{l|}{$\bm{\Theta}_{\UR}$}                && Side information available at the \tUR\\
\multicolumn{1}{l|}{$\bm{\Theta}_{\MN}$}                && Side information available at the CPU\\
\bottomrule
\end{tabular}  
\end{table}

 The remainder of this paper is organized as follows: {Section II describes the  CF-mMIMO proactive monitoring system model. Also in this section, the uplink and downlink channel estimation phases are presented.} In Section III, the SE expressions for the unstrusted communication link and proactive monitoring system are derived. In Section IV, we present an efficient algorithm for joint optimization of the MN mode assignment and jamming power allocation to maximize the MSP. Section V presents the numerical results and discussions, while Section VI concludes the paper.




\section{System Model}\label{sec:systemmodel}
\begin{figure}[t]
    \includegraphics[width=0.45\textwidth]{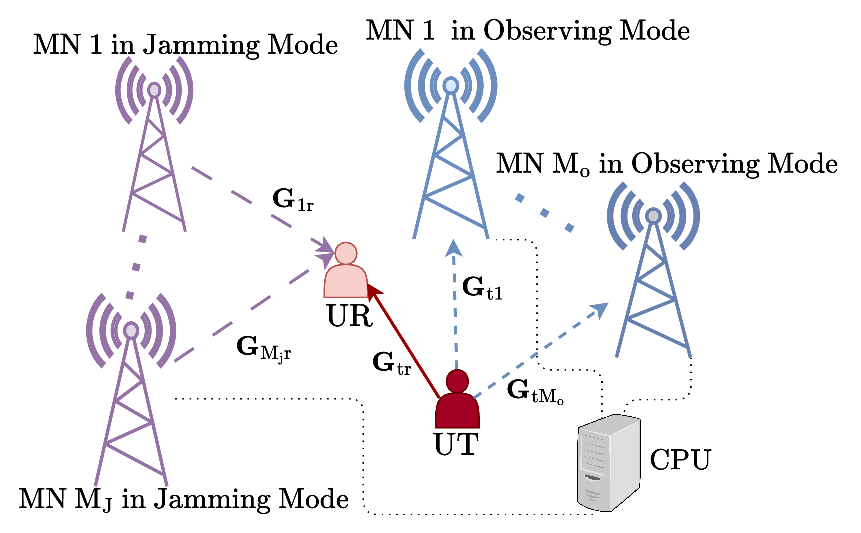}
    \label{fig:cellsize}
   \caption{System model of a CF-mMIMO proactive monitoring system with MNs operating in observing mode or jamming mode.}
\label{fig:systemmodel}
\end{figure}
 
As illustrated in Fig.~\ref{fig:systemmodel}, a CF-mMIMO proactive monitoring system is considered. This network consists of $M$ MNs with $\NM$ antennas, and a communication untrusted pair, in which the \tUT~and \tUR\footnote{Throughout this work, we use the subscripts $\UT$, $\UR$, and  $\MN$ to refer to the \tUT, \tUR, and  CPU, respectively.} are equipped with $\NT$ and $\NR$ antennas, respectively. All nodes are assumed to operate in HD mode. 
To monitor the untrusted pair during data transmission, a number of the MNs is assigned to operate in observing mode, where they receive untrusted messages from the \tUT, while the rest of the MNs operate in jamming mode, where they send jamming signals to disrupt the reception at the \tUR.

\begin{figure*}[!tbp]
\centering
  \includegraphics[width=0.65\linewidth]{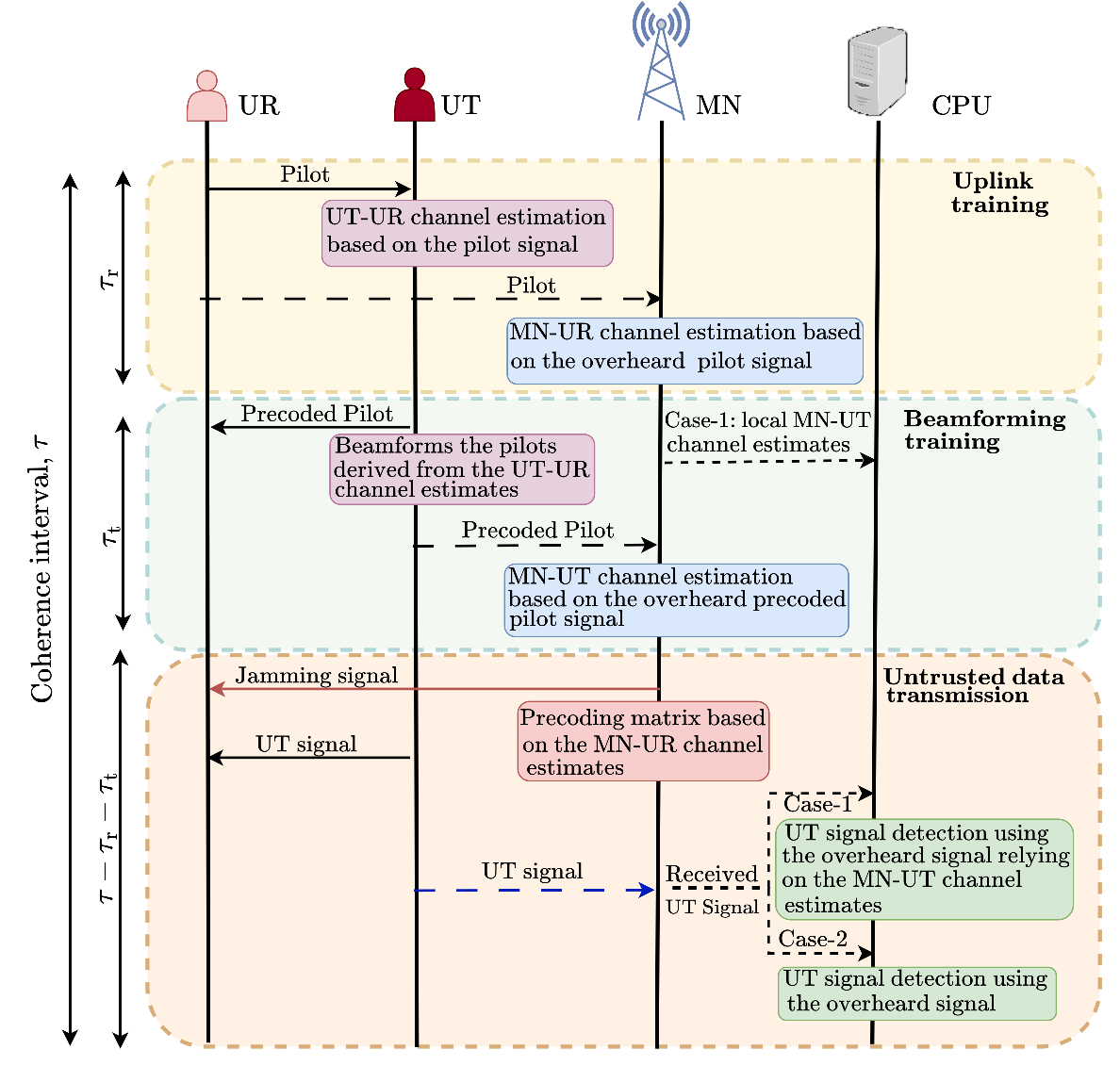}
    \caption{Flow chart of the three-stage transmission procedure for the CF-mMIMO monitoring system.}
    \label{fig:enter-label}
\label{fig:flowchart}
\hrulefill
\end{figure*}

A practical training-based transmission divided into uplink training and beamforming training is considered for the untrusted link. During the uplink training, the \tUR~transmits uplink pilots to the \tUT, enabling the \tUT~to estimate the channel to the \tUR. These channel estimates are then employed by the \tUT~in the beamforming training phase to define the precoder for transmitting the information signal to the \tUR. Also, during this phase, the \tUR~acquires knowledge of the channel gains via beamforming schemes to accurately detect the downlink signals transmitted from the \tUT~\cite{6736537}. Simultaneously, the MNs can leverage the pilot signals transmitted, during both the uplink training and downlink beamforming training phases of the untrusted link, to estimate the channels to both the \tUR~and \tUT.  This can be accomplished because the \tUT~and \tUR~are assumed to be part of the network infrastructure, albeit untrusted.\footnote{ Similar to~\cite{7321779} and~\cite{7880684}, we assume that the untrusted pair have been prior detected by authorized parties. Please refer to~\cite{8014299} for details on how to detect the untrusted parties.} Therefore, it is also feasible to consider that they are operating with known network protocols, and thus, the transmitted pilot sequences are known by the untrusted pair and by the MNs~\cite{8360138}. Accordingly,  the details of the uplink training, beamforming training, and downlink data transmission phases are provided in the next subsections and illustrated in a flow chart in Fig.~\ref{fig:flowchart}.
\vspace{-0.3em}
\subsection{Uplink Training}\label{sec:upt}
Let us assume that the pilot sequence transmitted from the $\nr$th antenna of the \tUR~to the \tUT~is denoted by $\phiu\in\mathbb{C}^{\tau_{\UR}\times1}$,  where $\tau_\UR$ represents the pilot sequence length. All pilot sequences are considered pairwisely orthonormal, i.e., $\phiuh\bm{\varphi}_{\nrl}=0$ if $\nr\neq\nrl$, with $||\phiu||^2=1$, $\forall\nr=1,\dots,\NR$. Thus, it is required that $\tau_{\UR}\geq\NR$. Simultaneously, the MNs also receive the pilot signals transmitted from the \tUR. Therefore, the received $\NT\times\tau_{\UR}$ pilot matrix at the \tUT, and the $\NM\times\tau_{\UR}$ pilot matrix  at the $m$th MN are, respectively, given by~\cite{9392380}
\vspace{-0.5em}
\begin{align}
   \mathbf{Y}_{\UTUR}&=\sqrt{\tau_{\UR} \rho_\UR} \sum_{\nr=1}^{\NR} \mathbf{g}_{\UTUR, \nr} \phiuh+\bm{\Omega}_{\UT},\\
    \mathbf{Y}_{m\UR}&=\sqrt{\tau_{\UR} \rho_\UR} \sum_{\nr=1}^{\NR} \mathbf{g}_{m\UR, \nr} \phiuh+\bm{\Omega}_{m\UR},
\end{align}
where $\rho_\UR$ is the normalized power of each pilot symbol
transmitted by the \tUR, while $\bm{\Omega}_{\UT}\in\mathbb{C}^{\NT\times\tau_{\UR}}$ and $\bm{\Omega}_{m\UR}\in\mathbb{C}^{\NM\times\tau_{\UR}}$ are the noise matrices at the \tUT~and at the $m$th MN, which are assumed to have independent and identically distributed (i.i.d.) $\mathcal{CN}(0,1)$ entries.  Also, $\mathbf{g}_{\UTUR, \nr}$ is the $\nr$th column of the channel matrix from the \tUR~to the \tUT, denoted by $\mathbf{G}_{\UTUR}$, while $\mathbf{g}_{m\UR, \nr}$ is the $\nr$th column of the channel matrix from the \tUR~to the $m$th MN,  denoted by $\mathbf{G}_{m\UR}$.  Accordingly, 
  $\mathbf{G}_{\UTUR}\in\mathbb{C}^{\NT\times \NR}$ and $\mathbf{G}_{m\UR}\in\mathbb{C}^{\NM\times \NR}$ are modeled as
\begin{align}\label{eq:channeldef}
\mathbf{G}_{\UTUR}&=\sqrt{\beta_{\UTUR}}\mathbf{H}_{\UTUR},\\
\mathbf{G}_{m\UR}&=\sqrt{\beta_{m\UR}}\mathbf{H}_{m\UR},
\end{align}
where $\beta_{\UTUR}$ and $\beta_{m\UR}$ are large-scale fading coefficients, while $\mathbf{H}_{\UTUR}$ and $\mathbf{H}_{m\UR}$ are the associated small-scale fading matrices between the \tUR~and \tUT~and between the \tUR~and the $m$th MN, respectively. The entries of the small-scale fading matrices are assumed to be i.i.d. $\mathcal{CN}(0,1)$. The pilot signals received at the \tUT~and at the $m$th MN are projected onto $\phiu$, allowing the channel vector for each antenna of the \tUR~to be estimated by the \tUT~and MN $m$, as
\begin{align}
\Tilde{\mathbf{y}}_{\UTUR,\nr} & \triangleq\mathbf{Y}_{\UTUR} \phiu
=\sqrt{\tau_{\UR} \rho_{\UR}} \mathbf{g}_{\UTUR, \nr}+\bm{\Omega}_{\UT}\phiu, \label{eq:yRT}\\
\Tilde{\mathbf{y}}_{m\UR,\nr} & \triangleq\mathbf{Y}_{m\UR} \phiu
=\sqrt{\tau_{\UR} \rho_{\UR}} \mathbf{g}_{m\UR, \nr}+\bm{\Omega}_{m\UR}\phiu, \label{eq:yRm}
\end{align}
respectively. Based on \eqref{eq:yRT} and \eqref{eq:yRm}, a linear MMSE approach can be employed to attain the estimates of the channel responses $\mathbf{g}_{\UTUR, \nr}$ and $\mathbf{g}_{m\UR, \nr}$ as
\begin{align}\label{eq:hatg}
    \hat{\mathbf{g}}_{\UTUR, \nr}&=\esp\left\{\mathbf{g}_{\UTUR, \nr}\Tilde{\mathbf{y}}_{\UTUR,\nr}^H\right\}\!\left(\esp\left\{\Tilde{\mathbf{y}}_{\UTUR,\nr}\Tilde{\mathbf{y}}_{\UTUR,\nr}^H\right\}\right)^{-1}\Tilde{\mathbf{y}}_{\UTUR,\nr}\nonumber\\
    &=\frac{\sqrt{\tau_{\UR}\rho_{\UR}}\beta_{\UTUR}}{\tau_{\UR}\rho_{\UR}\beta_{\UTUR}+1}\Tilde{\mathbf{y}}_{\UTUR,\nr},\\
    \hat{\mathbf{g}}_{m\UR, \nr}\!&=\esp\left\{\mathbf{g}_{m\UR, \nr}\Tilde{\mathbf{y}}_{m\UR,\nr}^H\right\}\!\left(\esp\left\{\Tilde{\mathbf{y}}_{m\UR,\nr}\Tilde{\mathbf{y}}_{m\UR,\nr}^H\right\}\right)^{-1}\Tilde{\mathbf{y}}_{m\UR,\nr}\nonumber\\
    &=\frac{\sqrt{\tau_{\UR}\rho_{\UR}}\beta_{m\UR}}{\tau_{\UR}\rho_{\UR}\beta_{m\UR}+1}\Tilde{\mathbf{y}}_{m\UR,\nr},
\end{align}
respectively. From the property of the MMSE estimator, the components of $\hat{\mathbf{g}}_{\UTUR, \nr}$ and $\hat{\mathbf{g}}_{m\UR, \nr}$ are i.i.d. Gaussian, with a mean-square given by
\begin{align}
    \gamma_{\UTUR}&\triangleq\esp\{||\hat{\mathbf{g}}_{\UTUR, \nr}||^2\}=\frac{\tau_{\UR}\rho_{\UR}\beta_{\UTUR}^2}{\tau_{\UR}\rho_{\UR}\beta_{\UTUR}+1},\\
    \gamma_{m\UR}&\triangleq\esp\{||\hat{\mathbf{g}}_{m\UR, \nr}||^2\}=\frac{\tau_{\UR}\rho_{\UR}\beta_{m\UR}^2}{\tau_{\UR}\rho_{\UR}\beta_{m\UR}+1},
\end{align}
respectively.
\vspace{-0.5em}
\subsection{Beamforming Training}\label{sec:beamtrain}
In the beamforming training phase, the \tUT~beamforms the pilots using a precoding matrix derived from the channel estimate of the \tUR~obtained during the uplink training phase.  Note that this precoding matrix is deployed at the UT for transmitting data to the UR. Let us denote the precoding matrix by $\mathbf{W}=\Big[\frac{\mathbf{w}_{1}}{||\mathbf{w}_1||},\dots,\frac{\mathbf{w}_{\NR}}{||\mathbf{w}_\NR||}\Big]$, $\mathbf{W} \in\mathbb{C}^{\NT\times\NR}$, with $\frac{\mathbf{w}_{\nr}}{||\mathbf{w}_\nr||}$, $\nr=1,\dots,\NR$, being the intended normalized $N_{\UT}\times1$ precoding vector for each antenna of the \tUR. 
 Let $\phid\in\mathbb{C}^{\NR\times \tau_{\UT}}$ be the pilot sequence matrix from the \tUT~to \tUR, with $\tau_{\UT}$ being the duration (in symbols) of the beamforming training.  We assume that the rows of $\phid$ are pairwisely orthogonal, i.e., $\phid\phidh=\mathbf{I}_\NR$. Hence, it is required that $\tau_{\UT}\geq\NR$. Then, the received pilot matrix at the \tUR~$\in \mathbb{C}^{\NR \times \tau_\UT}$, and at the $m$th MN $\in \mathbb{C}^{\NM \times \tau_\UT}$ are given by 
\begin{align}
    \mathbf{Y}_{\URUT} &= \sqrt{\tau_{\UT}\rho_{\UT}}\mathbf{G}_{\UTUR}^H\mathbf{W}\phid + \bm{\Omega}_{\UR},\\
    \mathbf{Y}_{m\UT} &= \sqrt{\tau_{\UT}\rho_{\UT}}\mathbf{G}_{\UT m}^H\mathbf{W}\phid + \bm{\Omega}_{m\UT},
\end{align}
respectively, where $\rho_\UT$ is the normalized power of each pilot symbol
transmitted by \tUT, whereas $\mathbf{G}_{\UT m}\in\mathbb{C}^{\NT\times \NM}$ is the channel response between MN $m$ and \tUT, modeled as
\begin{align}
    \mathbf{G}_{\UT m}&=\sqrt{\beta_{\UT m}}\mathbf{H}_{\UT m},\label{eq:gtm}
\end{align}
where $\beta_{\UT m}$ is the large-scale fading coefficient and $\mathbf{H}_{\UT m}$ is the small-scale fading matrix between the $m$th MN and \tUT, with i.i.d. $\mathcal{CN}(0,1)$ entries. In addition, $\bm{\Omega}_{\UR}\in\mathbb{C}^{\NR\times\tau_{\UT}}$ and $\bm{\Omega}_{m\UT}\in\mathbb{C}^{\NM\times\tau_{\UT}}$ are the noise matrices at the \tUR~and at the $m$th MN, respectively, which are assumed to have i.i.d. $\mathcal{CN}(0,1)$ elements. As discussed in~\cite{6736537}, we can project $\phid$ onto $\mathbf{Y}_{\URUT}$ and $\mathbf{Y}_{m\UT}$, and use it to estimate the effective channels. Accordingly, 
\begin{align}
    \Tilde{\mathbf{Y}}_{\URUT}&\triangleq\mathbf{Y}_{\URUT}\phidh=\sqrt{\tau_{\UT}\rho_{\UT}}\mathbf{G}_{\UTUR}^H\mathbf{W}+\Tilde{\bm{\Omega}}_{\UR},\label{eq:yrtp}\\
    \Tilde{\mathbf{Y}}_{m\UT}&\triangleq\mathbf{Y}_{m\UT}\phidh=\sqrt{\tau_{\UT}\rho_{\UT}}\mathbf{G}_{\UT m}^H\mathbf{W}+\Tilde{\bm{\Omega}}_{m\UT},\label{eq:ymtp}
\end{align}
 where $\Tilde{\bm{\Omega}}_{\UR}\triangleq\bm{\Omega}_{\UR}\phidh$ and $\Tilde{\bm{\Omega}}_{m\UT}\triangleq\bm{\Omega}_{m\UT}\phidh$. Let us define $\mathbf{A}_\UR\triangleq\mathbf{G}_{\UTUR}^H\mathbf{W}$, with entries given by $a_{\nr,\nrl}\triangleq\mathbf{g}_{\UTUR,\nr}^H \mathbf{w}_{\nrl}$, and $\mathbf{B}_m=[\mathbf{b}_{1},\dots,\mathbf{b}_{\NM}]^H$ with $\bmest^H\triangleq\mathbf{g}_{\UT m,\nm}^H\mathbf{W}$. From \eqref{eq:yrtp} and \eqref{eq:ymtp}, the received pilot vector at each antenna of the \tUR~and MN $m$ are given, respectively, by
\begin{align}
    {\yrtest} &= \sqrt{\tau_{\UT}\rho_{\UT}}\mathbf{a}^T+ \Tilde{\bm{\omega}}_{\nr},\\
    {\ymtest}&= \sqrt{\tau_{\UT}\rho_{\UT}}\bmest^H+ \omegam,\label{eq:ynmd}
\end{align}
where $\mathbf{a}\triangleq[a_{\nr,1},\dots,a_{\nr,\NR}]$ is the $\nr$th column of $\mathbf{A}_\UR$, while ${\yrtest}$ and ${\ymtest}$ are the $n$th and $p$th columns of $\Tilde{\mathbf{Y}}_{\URUT}$ and $\Tilde{\mathbf{Y}}_{m\UT}$, respectively. Also, $\Tilde{\bm{\omega}}_{\nr}$ and $\omegam$ is the $n$th and $p$th column of $\Tilde{\bm{\Omega}}_{\UR}$ and $\Tilde{\bm{\Omega}}_{m\UT}$, respectively. 

\begin{proposition}
Assuming that $a_{\nr,1},\dots,a_{\nr,\NR}$ and $\bmest$ can be estimated independently, based on $\yrtest$ and $\ymtest$, the MMSE channel estimate of $a_{\nr,\nrl}$ and of $\bmest$ are written as
\begin{align}
    \hat{a}_{\nr,\nrl}
    =&~\esp\left\{a_{\nr,\nrl}\right\}+\frac{\sqrt{\tau_{\UT} \rho_{\UT}} \operatorname{Var}(a_{\nr,\nrl})}{\tau_{\UT} \rho_{\UT} \operatorname{Var}(a_{\nr,\nrl})+1}\nonumber\\
    &\times(\tilde{y}_{\nrl}-\sqrt{\tau_{\UT} \rho_{\UT}} \esp\left\{a_{\nr,\nrl}\right\}),\label{eq:hatar}\\
    \hat{\mathbf{b}}_{\nm}=&~\esp\left\{\bmest\right\}+\sqrt{\tau_\UT\rho_\UT}\mathbf{C}_{\bmest,\bmest}\left(\tau_\UT\rho_\UT\mathbf{C}_{\bmest,\bmest}+\mathbf{I}_{\NR}\right)^{-1}\nonumber\\
    &\times\left(\ymtest-\sqrt{\tau_{\UT} \rho_{\UT}} \esp\left\{\bmest\right\}\right),\label{eq:hatbm}
\end{align}
where $\tilde{y}_{\nrl}$ is the $\nrl$th element of $\yrtest$.
\end{proposition}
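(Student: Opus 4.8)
The plan is to recognize both estimators as instances of the standard Bayesian linear MMSE (LMMSE) estimator applied to the per-component observation models obtained after pilot projection in \eqref{eq:yrtp}--\eqref{eq:ynmd}. Recall that for a jointly distributed pair $(\mathbf{x},\mathbf{y})$ with known first and second moments, the LMMSE estimate is $\hat{\mathbf{x}}=\esp\{\mathbf{x}\}+\mathbf{C}_{\mathbf{x}\mathbf{y}}\mathbf{C}_{\mathbf{y}\mathbf{y}}^{-1}(\mathbf{y}-\esp\{\mathbf{y}\})$, where $\mathbf{C}_{\mathbf{x}\mathbf{y}}$ and $\mathbf{C}_{\mathbf{y}\mathbf{y}}$ are the relevant cross- and auto-covariances; for jointly Gaussian quantities this coincides with the conditional mean, while in the present setting it is the linear MMSE estimate built from the first two moments. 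I would therefore only need to (i) confirm that the post-projection noise is white with unit covariance, and (ii) evaluate the two covariances for each model.

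First, I would establish the noise statistics. Since the rows of $\phid$ are pairwise orthonormal, $\phid\phidh=\mathbf{I}_{\NR}$, and the original noise matrices $\bm{\Omega}_{\UR}$ and $\bm{\Omega}_{m\UT}$ have i.i.d.\ $\mathcal{CN}(0,1)$ entries, a direct second-moment computation of $\Tilde{\bm{\Omega}}_{\UR}=\bm{\Omega}_{\UR}\phidh$ and $\Tilde{\bm{\Omega}}_{m\UT}=\bm{\Omega}_{m\UT}\phidh$ shows that the projected noise columns $\Tilde{\bm{\omega}}_{\nr}$ and $\omegam$ remain zero-mean with covariance $\mathbf{I}_{\NR}$. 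This is the one place where the orthonormality of $\phid$ is essential, and it guarantees that the additive-noise term in each observation model has unit variance (scalar case) or identity covariance (vector case).

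Next, for the scalar estimate, I would read off from \eqref{eq:yrtp} the per-entry model $\tilde{y}_{\nrl}=\sqrt{\tau_{\UT}\rho_{\UT}}\,a_{\nr,\nrl}+\tilde{\omega}_{\nrl}$. Invoking the stated assumption that the $a_{\nr,\nrl}$ are estimated independently (so that their joint covariance is diagonal and the vector estimator decouples into scalars), I would compute the cross-covariance $\esp\{(a_{\nr,\nrl}-\esp\{a_{\nr,\nrl}\})(\tilde{y}_{\nrl}-\esp\{\tilde{y}_{\nrl}\})^{*}\}=\sqrt{\tau_{\UT}\rho_{\UT}}\operatorname{Var}(a_{\nr,\nrl})$ and the observation variance $\tau_{\UT}\rho_{\UT}\operatorname{Var}(a_{\nr,\nrl})+1$, and then substitute them, together with $\esp\{\tilde{y}_{\nrl}\}=\sqrt{\tau_{\UT}\rho_{\UT}}\esp\{a_{\nr,\nrl}\}$, into the scalar LMMSE formula to obtain \eqref{eq:hatar}. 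The vector estimate follows the same recipe applied to \eqref{eq:ynmd}: with $\ymtest=\sqrt{\tau_{\UT}\rho_{\UT}}\,\bmest+\omegam$ and $\bmest$ independent of $\omegam$, the cross-covariance is $\sqrt{\tau_{\UT}\rho_{\UT}}\,\mathbf{C}_{\bmest,\bmest}$ and the observation covariance is $\tau_{\UT}\rho_{\UT}\mathbf{C}_{\bmest,\bmest}+\mathbf{I}_{\NR}$; substituting into the vector LMMSE formula yields \eqref{eq:hatbm}.

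The calculations are routine, so the genuine content lies in two modeling points rather than in the algebra. The first is the noise-whitening step above, which must be verified before the unit-variance and identity-covariance terms can be asserted. The second, which I expect to be the main conceptual obstacle, is justifying the asymmetry between the two results: the entries of $\bmest^{H}=\mathbf{g}_{\UT m,\nm}^{H}\mathbf{W}$ are correlated across the antennas of the \tUR~because a single channel vector $\mathbf{g}_{\UT m,\nm}$ is projected onto all (generally non-orthogonal) columns of $\mathbf{W}$, so $\mathbf{C}_{\bmest,\bmest}$ is non-diagonal and the full vector estimator cannot collapse to scalars, whereas the independence hypothesis invoked for the $a_{\nr,\nrl}$ is precisely what permits the scalar reduction there. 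I would also note that, because the gain $\sqrt{\tau_{\UT}\rho_{\UT}}$ is real and positive, no conjugation affects the scaling, which keeps the complex cross-covariances clean.
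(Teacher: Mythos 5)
Your proposal is correct and follows essentially the same route as the paper's Appendix A: both invoke the standard LMMSE formula $\hat{\mathbf{b}}_{\nm}=\esp\{\bmest\}+\mathbf{C}_{\bmest,\ymtest}\mathbf{C}_{\ymtest\ymtest}^{-1}(\ymtest-\esp\{\ymtest\})$ and evaluate $\mathbf{C}_{\bmest,\ymtest}=\sqrt{\tau_\UT\rho_\UT}\,\mathbf{C}_{\bmest,\bmest}$ and $\mathbf{C}_{\ymtest,\ymtest}=\tau_\UT\rho_\UT\mathbf{C}_{\bmest,\bmest}+\mathbf{I}_{\NR}$ using independence of the channel and the projected noise, with the scalar estimate \eqref{eq:hatar} obtained as the one-dimensional specialization. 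Your added remarks (explicitly verifying that the pilot projection leaves the noise white, and explaining why $\mathbf{C}_{\bmest,\bmest}$ is non-diagonal while the $a_{\nr,\nrl}$ decouple by assumption) are sound refinements of details the paper leaves implicit, not a different method.
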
 
\begin{proof}\label{proof1}
 The proof is provided in Appendix A.
\end{proof}
\subsection{Untrusted Data Transmission}
The \tUT~uses the channel estimate obtained in the uplink training phase to 
precode the symbols. Then, during the untrusted data transmission phase, it transmits the precoded signal vector to the \tUR.  Let $\mathbf{x}_{\URUT}\in\mathbb{C}^{\NR\times1}$, with $\esp\left\{\mathbf{x}_{\URUT}\mathbf{x}_{\URUT}^H\right\}=\mathbf{I}_\NR$, be the symbol vector intended to the \tUR. The transmitted signal from  the \tUT, $\mathbf{s}_{\UT} \in \mathbb{C}^{\NT\times 1}$, is written as
\begin{align}\label{eq:st}
    \mathbf{s}_{\UT} = \sqrt{\rho_{\UT}}\mathbf{W}\bm{\Lambda}_\UR^{1/2}\mathbf{x}_{\URUT},
\end{align}
where $\bm{\Lambda}_\UR$ is a diagonal matrix whose diagonal elements are $\lambda_1,\dots,\lambda_\NR$, set to satisfy $\esp\left\{||\mathbf{s}_{\UT}||^2\right\}=\rho_{\UT}$.
Simultaneous to the untrusted data transmission, the MNs operating in the jamming mode send jamming signals to disrupt the communication of the untrusted link.  Let $\mathbf{x}^\mathrm{J}\in\mathbb{C}^{\NR\times1}$, with $\esp\left\{(\mathbf{x}^\mathrm{J})(\mathbf{x}^{\mathrm{J}})^H\right\}=\mathbf{I}_\NR$, denote the jamming symbol vector intended to the \tUR. Then, the signal vector transmitted by the $m$th MN in the jamming mode, $\mathbf{s}_{m}^\mathrm{J} \in\mathbb{C}^{\NM\times1}$, can be written as
\begin{align}\label{eq:sigj}
    \mathbf{s}_{m}^\mathrm{J} = (1-\alpha_m)\sqrt{\rho_\mathrm{J}}\Wj\bm{\Pi}_{m\UR}^{1/2}\mathbf{x}^\mathrm{J},
\end{align}
where $\rho_\mathrm{J}$ is the maximum normalized transmit power at the MNs in jamming mode, while $\alpha_m$ is a binary variable to indicate the operation assignment of each MN $m$, such that  MN $m$ operates in the jamming mode when $\alpha_m=0$  or it operates in the observing mode when $\alpha_m=1$. Also, $\Wj\in\mathbb{C}^{\NM\times \NR}$ is the precoding matrix at the $m$th MN in jamming mode, assumed to follow a maximum-ratio (MR) technique, as it maximizes the jamming power received at \tUR, i.e., $\Wj = \hat{\mathbf{G}}_{m\UR}$.  Note that $\bm{\Pi}_{m\UR}$ is a diagonal matrix with diagonal elements given by $\pi_{m,1},\dots,\pi_{m,\NR}$, chosen to satisfy $\esp\{||\mathbf{s}_{m}^\mathrm{J}||^2\}\leq\rho_\mathrm{J}$  for each MN in jamming mode, which can be further expressed as
\begin{align}~\label{eq:conpi}
    (1-\alpha_m)\sum_{\nr=1}^{\NR}\esp\{||\hat{\mathbf{g}}_{m\UR, \nr}||^2\}\pi_{m\UR,\nr} \leq 1, \forall m.
\end{align}
Therefore, given the transmitted signal $\mathbf{s}_{\UT}$ in \eqref{eq:st} and $\mathbf{s}_{m}^\mathrm{J}$ in \eqref{eq:sigj}, the  signals received at the \tUR~and at the $m$th MN in observing mode are written, respectively, as
\begin{align}
    \mathbf{y}_{\UR} =&~\mathbf{G}_{\UTUR}^H\mathbf{s}_{\UT}+\sum_{m=1}^M\mathbf{G}_{m\UR}^H \mathbf{s}_{m}^\mathrm{J}+\bm{\omega}_{\UR}\nonumber\\
    =&~\sqrt{\rho_{\UT}}\mathbf{G}_{\UTUR}^H\mathbf{W}\bm{\Lambda}_\UR^{1/2}\mathbf{x}_{\URUT}+\sum_{m=1}^M\!(1-\alpha_m)\sqrt{\rho_\mathrm{J}}\nonumber\\
    &\times\mathbf{G}_{m\UR}^H\Wj\bm{\Pi}_{m\UR}^{1/2}\mathbf{x}^\mathrm{J}+\bm{\omega}_{\UR},\label{eq:sigUR}\\
    \mathbf{y}_{m} =&~\alpha_m\mathbf{G}_{\UT m}^H\mathbf{s}_{\UT}+\alpha_m\sum_{m'=1}^M\mathbf{G}_{mm'}^H \mathbf{s}_{m'}^\mathrm{J}+\alpha_m\bm{\omega}_{m}\nonumber\\
    =&~\alpha_m\sqrt{\rho_{\UT}}\mathbf{G}_{\UT m}^H\mathbf{W}\bm{\Lambda}_\UR^{1/2}\mathbf{x}_{\URUT}+\alpha_m\sum_{m'=1}^M\!(1-\alpha_{m'})\sqrt{\rho_\mathrm{J}}\nonumber\\
    &\times\mathbf{G}_{mm'}^H\mathbf{W}_{m'}^{\mathrm{J}}\bm{\Pi}_{m'\UR}^{1/2}\mathbf{x}^\mathrm{J}+\alpha_m\bm{\omega}_{m},\label{eq:sigmt}
\end{align}
where $\bm{\omega}_{\UR}$  and $\bm{\omega}_{m}$ are the $\NR\times1$  and $\NM\times1$ noise vectors at the \tUR~and at the $m$th MN, respectively.  Also, $\mathbf{G}_{mm'}$ denotes the channel matrix between MN $m$ and MN $m'$, modeled as
\begin{align}
    \mathbf{G}_{mm'}&=\sqrt{\beta_{m m'}}\mathbf{H}_{m m'},\label{eq:gmm}
\end{align}
where $\beta_{m m'}$ is the large-scale fading coefficient and $\mathbf{H}_{m m'}$ is the small-scale fading matrix between  MN $m$ and MN $m'$, with i.i.d. $\mathcal{CN}(0,1)$ entries for $m'\neq m$, whereas  $\mathbf{G}_{mm'}=\mathbf{0}$, for $m'=m$.
 
To detect $\mathbf{x}_{\URUT}$, the $m$th MN uses the effective channel estimate  $\hat{\mathbf{B}}_m=[\hat{\mathbf{b}}_{1},\dots,\hat{\mathbf{b}}_{\NM}]^H \in \mathbb{C}^{\NR\times\NR}$  to  combine its  received signal. Specifically,   an MMSE combining matrix is designed as
\begin{align}\label{eq:vmtil}
    {\mathbf{V}}_m=\hat{\mathbf{B}}_m\Big(\hat{\mathbf{B}}_m^H\hat{\mathbf{B}}_m+\varrho\mathbf{I}_{\NR}\Big)^{-1},
\end{align}
where $\varrho$ is the per stream signal-to-noise ratio (SNR). Finally, the aggregated received signal   for observing the untrusted link at the CPU can be obtained as
\begin{align}\label{eq:tildexrt}
    {\mathbf{z}}_{\MN}&=\sum_{m=1}^{M}\alpha_m{\mathbf{V}}_m^H\mathbf{y}_{m}
    =\sum_{m=1}^{M}\alpha_m\left(\mathbf{d}_c+\mathbf{n}_c+\mathbf{i}_c\right),
\end{align}
where
\vspace{-0.2em}
\begin{align}
    \mathbf{d}_{\MN} &\triangleq \sqrt{\rho_{\UT}}{\mathbf{V}}_m^H\mathbf{G}_{\UT m}^H\mathbf{W}\bm{\Lambda}_\UR^{1/2}\mathbf{x}_{\URUT},\label{eq:dc}\\
    \mathbf{n}_{\MN} &\triangleq {\mathbf{V}}_m^H\bm{\omega}_{m},\label{eq:nc}\\
    \mathbf{i}_{\MN} &\triangleq {\mathbf{V}}_m^H\!\sum_{m'=1}^{M}\!(1-\alpha_{m'})\sqrt{\rho_\mathrm{J}}
    \mathbf{G}_{mm'}^H\mathbf{W}_{m'}^{\mathrm{J}}\bm{\Pi}_{m'\UR}^{1/2}\mathbf{x}^\mathrm{J}.\label{eq:jc}
\end{align}
We note that $\mathbf{d}_{\MN}$, $\mathbf{n}_{\MN}$ and $\mathbf{i}_{\MN}$ stand  for the desired signal, the noise, and the inter-MN interference signal, respectively.
\vspace{-0.5em}
\subsection{Large-$M$ Analysis}\label{sec:largem}
 In this subsection, we investigate the monitoring performance of the CF-mMIMO proactive monitoring system in asymptotic regimes, where the number of MNs goes to infinity. To simplify the analysis and gain a better insight, we assume perfect CSI knowledge between the MNs and the \tUT, the MNs, and the \tUR, and between the \tUT~and the \tUR. Moreover, we also consider that the precoding matrix designed by the \tUT~follows the  MRT  precoding scheme, that is, $\mathbf{W}=\mathbf{G}_{\UTUR}$. The analysis for the  ZF precoding technique follows a similar methodology.  Accordingly, we evaluate the asymptotic behavior of the system in two cases: \emph{1}) a large number of MNs in observing mode for a fixed number of MNs in jamming mode, and \emph{2}) a large number of MNs in jamming mode for a fixed number of MNs in observing mode. Let $M_\mathrm{o}$ and $M_\mathrm{J}$ denote the number of MNs in observing mode and in jamming mode, respectively. The corresponding results are presented next in Propositions~\ref{prop:largemo} and~\ref{prop:largemj}.   

\begin{proposition}\label{prop:largemo}
   For any finite $M_{\mathrm{J}}$, as $M_{\mathrm{o}} \to \infty$, and assuming perfect CSI knowledge between all nodes  and MRT precoding at the \tUT, we have 
   \begin{align}\label{eq:asymptmo}
       \frac{{\mathbf{z}}_{\MN}}{M_{\mathrm{o}}} 
       -
       \frac{1}{M_{\mathrm{o}}}\sqrt{\rho_{\UT}}\sum_{m=1}^{M}\alpha_m\esp\{{\mathbf{V}}_m^H\mathbf{G}_{\UT m}^H\mathbf{W}\bm{\Lambda}_\UR^{1/2}\}\mathbf{x}_{\URUT} \xrightarrow{\text{a.s.}} \mathbf{0},
   \end{align}
where $\xrightarrow{\text{a.s.}}$ denotes almost sure (a.s.) convergence.
\end{proposition}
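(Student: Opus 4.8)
The plan is to split the normalized aggregate $\mathbf{z}_{\MN}/M_{\mathrm{o}}$ into its desired, noise, and inter-MN interference contributions and to treat each as a normalized sum of (conditionally) independent random vectors to which a strong law of large numbers (SLLN) applies. Writing $M_{\mathrm{o}}=\sum_{m=1}^{M}\alpha_m$ and recalling \eqref{eq:tildexrt} and \eqref{eq:dc}--\eqref{eq:jc}, I would first condition on the transmitted data vector $\mathbf{x}_{\URUT}$, the jamming vector $\mathbf{x}^\mathrm{J}$, and the channels associated with the $M_{\mathrm{J}}$ jamming MNs (all of which are common to every summand). Given this conditioning, each surviving summand ($\alpha_m=1$) is a function of the observing-MN-specific quantities $\mathbf{G}_{\UT m}$, $\{\mathbf{G}_{mm'}\}_{m'}$, and $\bm{\omega}_{m}$, which are mutually independent across $m$; hence the three partial sums are each a normalized sum of $M_{\mathrm{o}}$ independent vectors, to which Kolmogorov's SLLN can be applied entrywise.

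For the desired part, the conditional mean of $\mathbf{d}_{\MN}$ given the data is exactly $\sqrt{\rho_{\UT}}\,\esp\{\mathbf{V}_m^H\mathbf{G}_{\UT m}^H\mathbf{W}\bm{\Lambda}_\UR^{1/2}\}\mathbf{x}_{\URUT}$, so the deterministic term subtracted in \eqref{eq:asymptmo} is precisely $\tfrac{1}{M_{\mathrm{o}}}\sum_m\alpha_m\esp\{\mathbf{d}_{\MN}\mid\mathbf{x}_{\URUT}\}$, and the SLLN forces $\tfrac{1}{M_{\mathrm{o}}}\sum_m\alpha_m(\mathbf{d}_{\MN}-\esp\{\mathbf{d}_{\MN}\mid\mathbf{x}_{\URUT}\})\to\mathbf{0}$ almost surely. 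For the noise part, since $\bm{\omega}_m$ is zero-mean and independent of the combiner $\mathbf{V}_m$, each term $\mathbf{V}_m^H\bm{\omega}_m$ has zero mean, so $\tfrac{1}{M_{\mathrm{o}}}\sum_m\alpha_m\mathbf{n}_{\MN}\to\mathbf{0}$ a.s. For the interference part, conditioning as above and using $\esp\{\mathbf{G}_{mm'}\}=\mathbf{0}$ together with the independence of $\mathbf{G}_{mm'}$ from $\mathbf{V}_m$ (which depends only on $\mathbf{G}_{\UT m}$), the conditional mean of each $\mathbf{i}_{\MN}$ vanishes, whence $\tfrac{1}{M_{\mathrm{o}}}\sum_m\alpha_m\mathbf{i}_{\MN}\to\mathbf{0}$ a.s. on a conditioning set of probability one; removing the conditioning via the tower property yields unconditional a.s. convergence. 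Summing the three contributions and cancelling the matched deterministic term establishes \eqref{eq:asymptmo}.

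The main obstacle is to justify the SLLN rigorously, because the summands are independent but not identically distributed (the large-scale coefficients $\beta_{\UT m}$, $\beta_{mm'}$ differ across $m$). I would invoke Kolmogorov's criterion, which requires $\sum_{m}\operatorname{Var}(\cdot)/m^{2}<\infty$, reducing everything to a uniform second-moment bound on each summand. The key enabling fact is that the regularized MMSE combiner in \eqref{eq:vmtil} has operator norm bounded deterministically by $1/(2\sqrt{\varrho})$, independent of the channel realization, since the singular values of $\hat{\mathbf{B}}_m(\hat{\mathbf{B}}_m^H\hat{\mathbf{B}}_m+\varrho\mathbf{I}_{\NR})^{-1}$ are of the form $\sigma/(\sigma^2+\varrho)\le 1/(2\sqrt{\varrho})$. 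Combined with the finiteness of all Gaussian moments of $\mathbf{G}_{\UT m}$ and $\mathbf{G}_{mm'}$, the boundedness of the large-scale coefficients, and the power-normalization constraint \eqref{eq:conpi}, this yields a uniform bound $\operatorname{Var}(\cdot)\le C<\infty$ for every summand, so Kolmogorov's condition is met and the three SLLN applications are legitimate. Verifying these moment bounds—particularly that the combiner norm and the jamming-power normalization remain controlled as $M_{\mathrm{o}}\to\infty$ for fixed $M_{\mathrm{J}}$—is the bulk of the technical work.
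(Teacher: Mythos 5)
Your overall strategy coincides with the paper's Appendix B: decompose ${\mathbf{z}}_{\MN}$ into the desired, noise, and inter-MN interference components \eqref{eq:dc}--\eqref{eq:jc}, eliminate the two zero-mean components by a law of large numbers, and center the desired component at its expectation. Where the paper merely invokes ``the law of large numbers and the Chebyshev's inequality,'' you supply the technical scaffolding it omits (Kolmogorov's variance criterion, the deterministic bound $\|\mathbf{V}_m\|\leq 1/(2\sqrt{\varrho})$ on the regularized combiner, bounded large-scale coefficients), which is a genuine improvement in rigor. You also drop the paper's closing step of verifying that the expectation term does not vanish as $M_{\mathrm{o}}\to\infty$; that step is needed to make Remark~1 meaningful, but not for the literal convergence claim \eqref{eq:asymptmo}, so its omission is acceptable.

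There is, however, a concrete gap in your treatment of the desired term. Your conditioning set (data symbol, jamming symbol, jamming MNs' channels) omits the precoder $\mathbf{W}$, which under MRT with perfect CSI is the column-normalized $\mathbf{G}_{\UTUR}$ and is \emph{common to every summand}: it enters $\mathbf{d}_{\MN}$ directly and enters every combiner $\mathbf{V}_m$ through $\mathbf{B}_m=\mathbf{G}_{\UT m}^H\mathbf{W}$. Consequently, your claim that, given your conditioning, each surviving summand is a function of MN-specific independent quantities is false, and the SLLN for independent summands cannot be applied as written. The natural repair---adding $\mathbf{W}$ to the conditioning set---restores independence across $m$, but then the SLLN centers the desired sum at $\esp\{\mathbf{V}_m^H\mathbf{G}_{\UT m}^H\mathbf{W}\bm{\Lambda}_\UR^{1/2}\,|\,\mathbf{W}\}$, which for $\NR\geq 2$ is a nondegenerate random function of $\mathbf{W}$ (the conditional law of $\mathbf{B}_m$ given $\mathbf{W}$ depends on the Gram matrix $\mathbf{W}^H\mathbf{W}$, whose off-diagonal entries are random and nonzero for finite $\NT$ since the MRT columns are not orthogonal). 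This conditional mean does not coincide with the unconditional expectation appearing in \eqref{eq:asymptmo}, and your tower-property step cannot close the mismatch: averaging over $m$ does not average over $\mathbf{W}$, which is frozen across all summands. The discrepancy is harmless for the noise and interference components (their conditional means vanish either way), so the gap is localized entirely in the desired-signal centering. To be fair, the paper's own proof applies Chebyshev with unconditional means and silently glosses over exactly this cross-summand coupling through $\mathbf{W}$; but because your write-up makes the conditional-independence claim explicit, the flaw becomes explicit too. A complete argument would either state the limit with the conditional expectation given $\mathbf{W}$ (arguably the correct reading of the proposition) or justify separately why $\esp\{\cdot\,|\,\mathbf{W}\}$ may be replaced by $\esp\{\cdot\}$.
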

\begin{proof}
    The proof is provided in Appendix B.
\end{proof}
\begin{remark}
Note from \eqref{eq:asymptmo} that as $M_{\mathrm{o}} \to \infty$, the normalized observed signal (normalized by $M_{\mathrm{o}}$) includes only the desired component, as the normalized inter-MN interference and noise are canceled out. Thus, for a finite number of MNs in jamming mode, the monitoring performance is unbounded in terms of the number of MNs in observing mode, and can increase without limit as more observing MNs are employed.
\end{remark}
\begin{proposition}\label{prop:largemj}
    For any finite $M_{\mathrm{o}}$, as $M_{\mathrm{J}} \to \infty$, and assuming perfect CSI knowledge between all nodes and MRT  precoding at the \tUT~and at the MNs in jamming mode, and by down scaling the transmit power of each MN in jamming mode  according to   $\rho_{\mathrm{J}} = \frac{E_{\mathrm{J}}}{M_{\mathrm{J}}^2}$, where $E_{\mathrm{J}}$ is fixed, we have that~\eqref{eq:sigUR} and~\eqref{eq:tildexrt} a.s. converge, respectively, to 
    \begin{align}
        \mathbf{y}_{\UR}&=\sqrt{\rho_{\UT}}\mathbf{G}_{\UTUR}^H\mathbf{W}\bm{\Lambda}_\UR^{1/2}\mathbf{x}_{\URUT}+\bm{\omega}_{\UR}+\frac{\sqrt{E_{\mathrm{J}}}}{M_{\mathrm{J}}}\sum_{m=1}^{M}(1-\alpha_m)\mathbf{G}_{m\UR}^H\nonumber\\&~~~\times\mathbf{G}_{m\UR}\bm{\Pi}_{m\UR}^{1/2}\mathbf{x}^\mathrm{J}-\frac{\sqrt{E_{\mathrm{J}}}}{M_{\mathrm{J}}}\sum_{m=1}^{M}(1-\alpha_m)\esp\{\mathbf{G}_{m\UR}^H\mathbf{G}_{m\UR}\}\nonumber\\&~~~\times\bm{\Pi}_{m\UR}^{1/2}\mathbf{x}^\mathrm{J}\xrightarrow{\text{a.s.}} \mathbf{0}, \text{as } M_{\mathrm{J}} \to \infty,\label{eq:yrmj}\\
        {\mathbf{z}}_{\MN}&\xrightarrow{\text{a.s.}}\sum_{m=1}^{M}\alpha_m{\left(\mathbf{d}_c+\mathbf{n}_c\right)}, \text{as } M_{\mathrm{J}} \to \infty.\label{eq:cpumj}  
    \end{align}
\end{proposition}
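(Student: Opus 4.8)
The plan is to obtain both limits as instances of a strong law of large numbers (SLLN) applied to the jamming contributions, after inserting the perfect-CSI maximum-ratio precoder $\mathbf{W}_{m'}^{\mathrm{J}}=\mathbf{G}_{m'\UR}$ and the power scaling $\rho_{\mathrm{J}}=E_{\mathrm{J}}/M_{\mathrm{J}}^2$, for which $\sqrt{\rho_{\mathrm{J}}}=\sqrt{E_{\mathrm{J}}}/M_{\mathrm{J}}$, into~\eqref{eq:sigUR} and~\eqref{eq:tildexrt}. First I would treat~\eqref{eq:yrmj}. In~\eqref{eq:sigUR} the desired term and the noise $\bm{\omega}_{\UR}$ are independent of $\rho_{\mathrm{J}}$ and are kept as is, while the jamming term becomes $\frac{\sqrt{E_{\mathrm{J}}}}{M_{\mathrm{J}}}\sum_{m=1}^{M}(1-\alpha_m)\mathbf{G}_{m\UR}^H\mathbf{G}_{m\UR}\bm{\Pi}_{m\UR}^{1/2}\mathbf{x}^{\mathrm{J}}$, i.e. a $1/M_{\mathrm{J}}$-weighted sum over the $M_{\mathrm{J}}$ jamming MNs. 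Hence~\eqref{eq:yrmj} is equivalent to showing that this random sum converges almost surely to the deterministic quantity obtained by replacing $\mathbf{G}_{m\UR}^H\mathbf{G}_{m\UR}$ with its mean $\esp\{\mathbf{G}_{m\UR}^H\mathbf{G}_{m\UR}\}=\NM\beta_{m\UR}\mathbf{I}_{\NR}$.

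The difference between the random sum and its mean is $\frac{\sqrt{E_{\mathrm{J}}}}{M_{\mathrm{J}}}\sum_{m=1}^{M}(1-\alpha_m)\big(\mathbf{G}_{m\UR}^H\mathbf{G}_{m\UR}-\esp\{\mathbf{G}_{m\UR}^H\mathbf{G}_{m\UR}\}\big)\bm{\Pi}_{m\UR}^{1/2}\mathbf{x}^{\mathrm{J}}$, which is a sample average of centred $\NR\times\NR$ matrices that are mutually independent across $m$ because distinct MNs see independent channels. I would apply the SLLN entrywise: each entry of $\mathbf{G}_{m\UR}^H\mathbf{G}_{m\UR}$ has variance of order $\NM\beta_{m\UR}^2$, so Kolmogorov's summability condition $\sum_m \operatorname{Var}(\cdot)/m^2<\infty$ holds provided the large-scale coefficients remain uniformly bounded, and every averaged entry converges a.s. to zero. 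Since $\bm{\Pi}_{m\UR}^{1/2}\mathbf{x}^{\mathrm{J}}$ has fixed dimension, the whole vector converges a.s. to $\mathbf{0}$, which is exactly~\eqref{eq:yrmj}.

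For~\eqref{eq:cpumj} I would decompose $\mathbf{z}_{\MN}$ in~\eqref{eq:tildexrt} through~\eqref{eq:dc}--\eqref{eq:jc}. The desired part $\mathbf{d}_{\MN}$ and the noise $\mathbf{n}_{\MN}$ do not involve $\rho_{\mathrm{J}}$ and are untouched, so only the inter-MN interference changes, becoming $\mathbf{i}_{\MN}=\sqrt{E_{\mathrm{J}}}\,\mathbf{V}_m^H\big[\frac{1}{M_{\mathrm{J}}}\sum_{m'=1}^{M}(1-\alpha_{m'})\mathbf{G}_{mm'}^H\mathbf{G}_{m'\UR}\bm{\Pi}_{m'\UR}^{1/2}\big]\mathbf{x}^{\mathrm{J}}$. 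The key observation is that the MN-to-MN channel $\mathbf{G}_{mm'}$ and the MN-to-\tUR~channel $\mathbf{G}_{m'\UR}$ are mutually independent and zero-mean, so $\esp\{\mathbf{G}_{mm'}^H\mathbf{G}_{m'\UR}\}=\mathbf{0}$; the bracketed term is therefore a $1/M_{\mathrm{J}}$-weighted sum of independent, zero-mean matrices and, by the same SLLN argument, converges a.s. to $\mathbf{0}$. The randomness of $\mathbf{V}_m$ I would dispose of by conditioning on the \tUT-side channels $\mathbf{G}_{\UT m}$ and $\mathbf{G}_{\UTUR}$, on which $\mathbf{V}_m$ depends and which are independent of the cross channels entering the bracket; the conditional mean of the bracket is still $\mathbf{0}$ and its a.s. limit is $\mathbf{0}$, while $\|\mathbf{V}_m\|$ is a.s. finite. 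Thus $\mathbf{i}_{\MN}\to\mathbf{0}$ a.s., and summing over the finite observing set gives $\mathbf{z}_{\MN}\xrightarrow{\text{a.s.}}\sum_{m=1}^{M}\alpha_m(\mathbf{d}_{\MN}+\mathbf{n}_{\MN})$.

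The hard part will be promoting the easy mean-square estimates to almost-sure convergence for sums of independent but non-identically distributed terms: a variance computation immediately yields convergence in mean square and in probability, but the a.s. statement needs either Kolmogorov's SLLN with a uniform variance bound or a fourth-moment-plus-Borel--Cantelli argument, and this is precisely where the (implicit) regularity assumption that the large-scale fading coefficients stay uniformly bounded as $M_{\mathrm{J}}\to\infty$ becomes indispensable. The secondary subtlety is the statistical coupling introduced by the combining matrices $\mathbf{V}_m$ in $\mathbf{z}_{\MN}$, which I resolve by the conditioning/independence argument above rather than by any direct moment computation on $\mathbf{i}_{\MN}$.
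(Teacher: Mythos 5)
Your proposal is correct and takes essentially the same approach as the paper: insert $\rho_{\mathrm{J}}=E_{\mathrm{J}}/M_{\mathrm{J}}^2$ and the perfect-CSI MR jamming precoder, observe that the desired-signal and noise terms are untouched, and apply a law-of-large-numbers concentration to the $1/M_{\mathrm{J}}$-normalized sums of independent terms across the jamming MNs --- the centered $\mathbf{G}_{m\UR}^H\mathbf{G}_{m\UR}$ at the \tUR~and the zero-mean cross-products $\mathbf{G}_{mm'}^H\mathbf{G}_{m'\UR}$ at the CPU. The paper's (very brief) proof invokes Chebyshev's inequality for this step, exactly as in its large-$M_{\mathrm{o}}$ proposition, whereas you upgrade it to Kolmogorov's SLLN under bounded large-scale fading and handle the fixed combining matrices $\mathbf{V}_m$ explicitly; this is a more careful justification of the claimed almost-sure (rather than in-probability) convergence, but not a different route.
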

\begin{proof}
    For the received signal at the \tUR, by replacing $\rho_{\mathrm{J}}$ as $\frac{E_{\mathrm{J}}}{M_{\mathrm{J}}^2}$, and employing the Chebyshev's inequality, \eqref{eq:yrmj} is attained.  For the received signal at the CPU, we also employ the Chebyshev's inequality. In this case, similar to Proposition \ref{prop:largemo}, note that the inter-MN interference terms   $\mathbf{i}_c$ in ${\mathbf{z}}_{\MN}$  are independent and converge to zero as $M_{\mathrm{J}}\to \infty$, resulting in \eqref{eq:cpumj}.    
\end{proof}
\begin{remark}
 From Proposition \ref{prop:largemj}, note that as $M_{\mathrm{J}} \to \infty$, by setting $\rho_{\mathrm{J}} = \frac{E_{\mathrm{J}}}{M_{\mathrm{J}}^2}$, we have that the inter-MN interference caused by the jamming signal at the CPU is canceled out. Meanwhile, at the \tUR, the effect of the jamming signal is still present. Hence, the monitoring performance can be improved by adjusting $E_{\mathrm{J}}$. 
\end{remark}

\section{Spectral Efficiency}\label{sec:se}
In this section, the SE expressions for the untrusted communication link and the  proactive monitoring system are derived. First, we provide a general formula for the SE with MMSE-SIC scheme given arbitrary side information available at either the \tUR~and MNs, which is independent of the transmit signals.  Next, a closed-form SE expression at the \tUR~is derived, assuming perfect knowledge of the CSI of \tUT. For the proactive monitoring system, two cases are investigated: \emph{1)} imperfect CSI knowledge at the MNs and no CSI
knowledge at the CPU; \emph{2)} imperfect CSI knowledge at both the MNs
and CPU.
\begin{proposition}\label{Prop:SE}
Given the received signal at the \tUR~and at the $m$th MN  as in \eqref{eq:sigUR} and \eqref{eq:sigmt}, respectively, the achievable SE at the \tUR~and at the CPU assuming MMSE-SIC can be written as
\begin{align}
    \mathrm{SE}_\UR&=\Big(\!1\!-\!\frac{\tau_\UT+\tau_\UR}{\tau}\!\Big)\esp\Bigl\{\log_2\left(\det\left(\mathbf{I}_{N_\UR}+\bm{\Upsilon}_\UR\right)\right)\Bigr\},\label{eq:SEr}\\
    \mathrm{SE}_\MN&=\Big(\!1\!-\!\frac{\tau_\UT+\tau_\UR}{\tau}\!\Big)\esp\Bigl\{\log_2\left(\det\left(\mathbf{I}_{N_\UR}+\bm{\Upsilon}_\MN\right)\right)\Bigr\},\label{eq:SEm}
\end{align}
respectively, where $\tau$ is the coherence interval, while $\bm{\Upsilon}_\UR$ and $\bm{\Upsilon}_{\MN}$ are given by
\begin{align}
    \bm{\Upsilon}_\UR&=  \rho_\UT\esp\big\{\bm{\Lambda}_\UR^{1/2}\mathbf{A}_\UR^H|\bm{\Theta}_\UR\big\}(\bm{\Psi}_{\UR})^{-1}\esp\big\{\mathbf{A}_\UR\bm{\Lambda}_\UR^{1/2}|\bm{\Theta}_\UR\big\},\label{eq:upsilonr}\\
    \bm{\Upsilon}_{\MN}&=  \rho_\UT\esp\big\{\mathbf{D}_m^H|\bm{\Theta}_{\MN}\big\}(\bm{\Psi}_{\MN})^{-1}\esp\big\{\mathbf{D}_m|\bm{\Theta}_{\MN}\big\},
\end{align}
where $\bm{\Theta}_{\UR}$ and $\bm{\Theta}_{\MN}$ represent the side information, independent of $\mathbf{x}_{\URUT}$. Moreover, $\mathbf{D}_m$, $\bm{\Psi}_{\UR}$ and $\bm{\Psi}_{\MN}$ are given by  
\vspace{-0.5em}
\begin{align}
    \mathbf{D}_m\triangleq&~\sum_{m=1}^M\alpha_m\mathbf{V}_m^H\mathbf{B}_m\bm{\Lambda}_\UR^{1/2},\\
    \bm{\Psi}_{\UR}=&~\mathbf{I}_{N_\UR}+\rho_{\mathrm{J}}\esp\big\{\Frj(\Frj)^H\big\}+\rho_\UT\esp\big\{\mathbf{A}_\UR\bm{\Lambda}_\UR\mathbf{A}_\UR^H|\bm{\Theta}_\UR\big\}\nonumber\\
    &-\rho_\UT\esp\big\{\mathbf{A}_\UR\bm{\Lambda}_\UR^{1/2}|\bm{\Theta}_\UR\big\}\esp\big\{\mathbf{A}_\UR\bm{\Lambda}_\UR^{1/2}|\bm{\Theta}_\UR\big\}, \\
    \bm{\Psi}_{\MN}=&~\rho_{\mathrm{J}}\esp\left\{\sum_{m=1}^{M}\sum_{l=1}^{M}\alpha_m\alpha_{l}\mathbf{V}_m^H\Fmj_m(\Fmj_l)^H\mathbf{V}_{l}|\bm{\Theta}_{\MN}\right\}\nonumber\\
    &+\rho_\UT\esp\Bigl\{\mathbf{D}_m\mathbf{D}_m^H|\bm{\Theta}_{\MN}\!\Bigr\}\!+\esp\Bigl\{\sum_{m=1}^{M}\sum_{l=1}^{M}\!\!\alpha_m\alpha_l{\mathbf{V}}_m^H{\mathbf{V}}_l|\bm{\Theta}_{\MN}\!\Bigr\}\nonumber\\
    &-\rho_\UT\esp\big\{\mathbf{D}_m|\bm{\Theta}_{\MN}\big\}\esp\big\{\mathbf{D}_m^H|\bm{\Theta}_{\MN}\big\},
\end{align}
respectively, where 
\vspace{-0.5em}
\begin{align}
    \Frj&\triangleq\sum_{m=1}^M\!(1-\alpha_m)\sqrt{\rho_\mathrm{J}}\mathbf{G}_{m\UR}^H\Wj\bm{\Pi}_{m\UR}^{1/2},\\
    \Fmj_m&\triangleq\!\sum_{m'=1}^M(1-\alpha_{m'})\mathbf{G}_{mm'}^H\mathbf{W}_{m'}^{\mathrm{J}}\bm{\Pi}_{m'\UR}^{1/2}.
\end{align}
\end{proposition}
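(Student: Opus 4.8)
The plan is to establish a single generic achievable-rate identity and then instantiate it twice: once with the side information $\bm{\Theta}_\UR$ for the untrusted link, and once with $\bm{\Theta}_\MN$ for the aggregated CPU observation $\mathbf{z}_\MN$. First I would cast each received signal into the canonical form $\mathbf{y}=\mathbf{M}\mathbf{x}_{\URUT}+\mathbf{n}$, where $\mathbf{M}$ is the effective channel matrix carrying the desired symbols (namely $\sqrt{\rho_\UT}\mathbf{A}_\UR\bm{\Lambda}_\UR^{1/2}$ at the \tUR, and $\sqrt{\rho_\UT}\mathbf{D}_m$ after MMSE combining and aggregation at the CPU), while $\mathbf{n}$ collects the jamming term $\Frj$ (resp.\ the combined jamming $\Fmj_m$), the thermal noise, and—at the CPU—the combiner-weighted noise $\mathbf{V}_m^H\bm{\omega}_m$. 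Conditioning on the side information, I split $\mathbf{M}=\esp\{\mathbf{M}|\bm{\Theta}\}+(\mathbf{M}-\esp\{\mathbf{M}|\bm{\Theta}\})$, retain the part that is deterministic given $\bm{\Theta}$ as the useful signal, and fold the zero-conditional-mean fluctuation into an effective noise.

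The central step is the standard worst-case uncorrelated-Gaussian-noise argument. Since $\mathbf{x}_{\URUT}$ is independent of $\bm{\Theta}$ with $\esp\{\mathbf{x}_{\URUT}\mathbf{x}_{\URUT}^H\}=\mathbf{I}_{N_\UR}$, and $\esp\{\mathbf{M}-\esp\{\mathbf{M}|\bm{\Theta}\}\,\big|\,\bm{\Theta}\}=\mathbf{0}$, the useful signal $\esp\{\mathbf{M}|\bm{\Theta}\}\mathbf{x}_{\URUT}$ and the aggregate effective noise are uncorrelated conditioned on $\bm{\Theta}$, so the fluctuation power can legitimately be lumped into the effective-noise covariance. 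Computing that conditional covariance reproduces $\bm{\Psi}_\UR$ (resp.\ $\bm{\Psi}_\MN$) term by term: the channel-uncertainty contribution yields $\rho_\UT(\esp\{\mathbf{A}_\UR\bm{\Lambda}_\UR\mathbf{A}_\UR^H|\bm{\Theta}_\UR\}-\esp\{\mathbf{A}_\UR\bm{\Lambda}_\UR^{1/2}|\bm{\Theta}_\UR\}\esp\{\cdots\})$, the jamming contributes $\rho_{\mathrm{J}}\esp\{\Frj(\Frj)^H\}$, and the thermal noise contributes the identity (resp.\ the $\mathbf{V}_m^H\mathbf{V}_l$ combining terms). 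Treating this effective noise as Gaussian with the matched covariance is a valid achievable lower bound, and MMSE-SIC with its chain-rule optimality attains the full conditional mutual information $\log_2\det(\mathbf{I}_{N_\UR}+\bm{\Upsilon})$ with $\bm{\Upsilon}=\esp\{\mathbf{M}|\bm{\Theta}\}^H\bm{\Psi}^{-1}\esp\{\mathbf{M}|\bm{\Theta}\}$, which is exactly \eqref{eq:upsilonr} and its CPU counterpart.

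Finally I would average over the realizations of $\bm{\Theta}$ and multiply by the pre-log efficiency $\big(1-(\tau_\UT+\tau_\UR)/\tau\big)$ that accounts for the uplink and beamforming training overhead, thereby obtaining \eqref{eq:SEr} and \eqref{eq:SEm}.

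The main obstacle is the achievability/optimality step rather than the algebra: I must justify carefully that (i) the fluctuation $(\mathbf{M}-\esp\{\mathbf{M}|\bm{\Theta}\})\mathbf{x}_{\URUT}$ is genuinely uncorrelated with the useful signal—so its power is admissible inside $\bm{\Psi}$—and (ii) the worst-case Gaussian substitution, combined with the optimality of per-stream MMSE followed by successive cancellation, delivers the tight $\log\det$ form rather than a looser per-stream sum. The remaining effort, namely expanding the double sums $\sum_m\sum_l\alpha_m\alpha_l(\cdots)$ that assemble $\bm{\Psi}_\MN$, is routine but bookkeeping-heavy, and demands care because each combiner $\mathbf{V}_m$ is itself a function of $\bm{\Theta}_\MN$ and must be kept inside the conditional expectation.
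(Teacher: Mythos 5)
Your proposal is correct and reproduces exactly the right $\bm{\Psi}$/$\bm{\Upsilon}$ structure, but it takes a formally different route from the paper. The paper's Appendix C works directly with the mutual information $I(\mathbf{x}_\URUT;\mathbf{y}_\UR,\bm{\Theta}_\UR)=h(\mathbf{x}_\URUT|\bm{\Theta}_\UR)-h(\mathbf{x}_\URUT|\mathbf{y}_\UR,\bm{\Theta}_\UR)$: it evaluates $h(\mathbf{x}_\URUT|\bm{\Theta}_\UR)=\log_2\det(\pi e\,\mathbf{I}_{N_\UR})$ under Gaussian signaling, upper-bounds the second term by the entropy of a Gaussian with the conditional LMMSE error covariance $\esp\{\epsilon_\URUT\epsilon_\URUT^H|\bm{\Theta}_\UR\}=\mathbf{C}_{\mathbf{x}_\URUT\mathbf{x}_\URUT|\bm{\Theta}_\UR}-\mathbf{C}_{\mathbf{x}_\URUT\mathbf{y}_\UR|\bm{\Theta}_\UR}\mathbf{C}_{\mathbf{y}_\UR\mathbf{y}_\UR|\bm{\Theta}_\UR}^{-1}\mathbf{C}_{\mathbf{y}_\UR\mathbf{x}_\URUT|\bm{\Theta}_\UR}$, computes the four covariance matrices explicitly, and extracts the $\log_2\det(\mathbf{I}_{N_\UR}+\bm{\Upsilon}_\UR)$ form via the matrix inversion lemma; the CPU case is handled identically with $\mathbf{z}_\MN$ in place of $\mathbf{y}_\UR$. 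You instead split the effective channel into its conditional mean plus a zero-conditional-mean fluctuation, fold the fluctuation into the noise, invoke the worst-case uncorrelated-Gaussian-noise theorem, and appeal to MMSE-SIC optimality on the resulting known-channel Gaussian MIMO channel. These are dual formulations of the same ``use-and-then-forget'' bound, and they yield identical matrices: your effective-noise covariance is precisely $\mathbf{C}_{\mathbf{y}\mathbf{y}|\bm{\Theta}}$ minus the useful-signal covariance, i.e., the paper's Schur-complement computation in disguise. What your route buys is an operational interpretation (an explicit receiver attaining the bound) and it avoids the matrix-inversion-lemma manipulation; what it costs is that your two flagged crux points---the conditional uncorrelatedness of the fluctuation term with the useful signal (which does hold, since $\mathbf{x}_\URUT$ is independent of the channels and of $\bm{\Theta}$, zero-mean, with identity covariance) and the worst-case Gaussian substitution---must be justified by essentially the same entropy-maximization argument that the paper applies once, directly to $h(\mathbf{x}_\URUT|\mathbf{y}_\UR,\bm{\Theta}_\UR)$. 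Both derivations are rigorous and deliver the stated result, including the pre-log factor $\big(1-(\tau_\UT+\tau_\UR)/\tau\big)$ and the outer expectation over the side information.
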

\begin{proof}
   The proof is provided in Appendix C. 
\end{proof}
Now, for the untrusted link, we examine the SE performance when the \tUR~has perfect knowledge of the effective untrusted channel, which represents the worst-case scenario from a monitoring performance perspective. Consequently, we can obtain the following closed-form expression for the SE.
\begin{proposition}
The SE at the \tUR, assuming perfect knowledge of the effective untrusted channel, is given by
\begin{align}\label{eq:ser}
    \mathrm{SE}_\UR=\Big(1-\frac{\tau_\UT+\tau_\UR}{\tau}\Big)\esp\Bigg\{\log_2\Big(1+\sum_{\nr=1}^{\NR}\Gamma_\nr\Big)\Bigg\},
\end{align}
where $\Gamma_\nr$ is the SINR at the $\nr$th receive antenna of \tUR, given by
    \begin{align}\label{eq:sinrnr}
         \Gamma_\nr = \frac{\rho_\UT\lambda_{\nr}|a_{\nr,\nr}|^2}{1+\rho_\UT\lambda_{\nrl}\sum_{\nrl=1 \atop \nrl\neq \nr}^{\NR}|a_{\nr,\nrl}|^2+\rho_{\mathrm{J}}\mathcal{I}},
    \end{align}
    \vspace{-1em}
    with 
    \vspace{-1em}
    \begin{align}
        \mathcal{I}\triangleq&~\NM\sum_{\nrl=1}^\NR\sum_{m=1}^{M}(1-\alpha_m){\pi}_{m,\nrl}\gamma_{m\UR}\nonumber\\
        &+\NM^2\left(\sum_{m=1}^{M}(1-\alpha_m)\sqrt{\pi_{m,\nr}}\gamma_{m\UR}\right)^2.
    \end{align}
\end{proposition}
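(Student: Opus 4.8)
The plan is to obtain the scalar, per-antenna form directly from the received signal \eqref{eq:sigUR} under perfect effective-channel knowledge, and to cross-check the noise-plus-jamming floor against the covariance that Proposition~\ref{Prop:SE} produces in this limit. First I would note that perfect knowledge of $\mathbf{A}_\UR=\mathbf{G}_{\UTUR}^H\mathbf{W}$ makes every conditional expectation of an $\mathbf{A}_\UR$-term in \eqref{eq:upsilonr} collapse to its argument, so the two signal-dependent terms of $\bm{\Psi}_\UR$ cancel and it reduces to the pure jamming-plus-noise covariance $\bm{\Psi}_\UR=\mathbf{I}_{\NR}+\rho_{\mathrm{J}}\esp\{\Frj(\Frj)^H\}$. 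Reading \eqref{eq:sigUR} entry-wise, the $\nr$th receive antenna carries the desired term $\sqrt{\rho_\UT\lambda_{\nr}}\,a_{\nr,\nr}$, the co-stream leakage $\sqrt{\rho_\UT}\sum_{\nrl\neq\nr}\sqrt{\lambda_{\nrl}}\,a_{\nr,\nrl}$, the unit-variance noise, and the jamming contribution $\big[\Frj\mathbf{x}^{\mathrm{J}}\big]_{\nr}$; treating the last three as interference identifies the numerator $\rho_\UT\lambda_{\nr}|a_{\nr,\nr}|^2$ and the first two denominator terms of \eqref{eq:sinrnr} immediately.

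The crux is the jamming covariance $\esp\{\Frj(\Frj)^H\}$, where $\Frj=\sqrt{\rho_{\mathrm{J}}}\sum_{m}(1-\alpha_m)\mathbf{G}_{m\UR}^H\hat{\mathbf{G}}_{m\UR}\bm{\Pi}_{m\UR}^{1/2}$ after substituting the MR jammer $\Wj=\hat{\mathbf{G}}_{m\UR}$. I would first show this matrix is diagonal: products from distinct MNs factor through independent zero-mean channels, and within one MN any entry mixing columns $\nr\neq\nrl$ vanishes because at least one factor $\mathbf{g}_{m\UR,\nr}$ or $\mathbf{g}_{m\UR,\nrl}$ is independent of the remaining terms and zero mean. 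For the $(\nr,\nr)$ entry I would use the MMSE decomposition $\mathbf{g}_{m\UR,\nr}=\hat{\mathbf{g}}_{m\UR,\nr}+\mathbf{e}_{m\UR,\nr}$ with $\mathbf{e}\perp\hat{\mathbf{g}}$ and split into a coherent and an incoherent part. The means $\esp\{\mathbf{g}_{m\UR,\nr}^H\hat{\mathbf{g}}_{m\UR,\nr}\}\propto \NM\gamma_{m\UR}$ survive across MNs, so squaring their weighted sum over the independent jamming nodes yields the coherent $\NM^2\big(\sum_{m}(1-\alpha_m)\sqrt{\pi_{m,\nr}}\gamma_{m\UR}\big)^2$ term, while the second- and fourth-order fluctuations of the matched column (from $\esp\{\|\hat{\mathbf{g}}\|^4\}$ and $\esp\{|\mathbf{e}^H\hat{\mathbf{g}}|^2\}$) together with the mismatched columns (from $\esp\{|\mathbf{g}^H\hat{\mathbf{g}}|^2\}$, $\nrl\neq\nr$) contribute the incoherent $\NM\sum_{\nrl}\sum_{m}(1-\alpha_m)\pi_{m,\nrl}\gamma_{m\UR}$ term. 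This establishes $\bm{\Psi}_\UR=\mathrm{diag}(1+\rho_{\mathrm{J}}\mathcal{I},\dots)$ with $\mathcal{I}$ as stated, and hence the jamming-plus-noise floor $1+\rho_{\mathrm{J}}\mathcal{I}$ of \eqref{eq:sinrnr}.

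Collecting the three denominator contributions gives $\Gamma_\nr$ exactly, and expressing the achievable SE in terms of these per-antenna SINRs, aggregated into a single logarithm and scaled by the pilot-overhead pre-log factor $1-(\tau_\UT+\tau_\UR)/\tau$ inherited from Proposition~\ref{Prop:SE}, produces \eqref{eq:ser}. I expect the jamming-covariance evaluation to be the main obstacle: keeping the $\NM^2$ (coherent) and $\NM$ (incoherent) scalings separate demands careful second- and fourth-moment Gaussian bookkeeping for the estimate and error columns weighted by the power-control coefficients $\pi_{m,\nr}$, and the diagonal structure of $\esp\{\Frj(\Frj)^H\}$ must be verified rather than assumed, since it is precisely this structure that decouples the receive antennas into the scalar SINRs $\Gamma_\nr$.
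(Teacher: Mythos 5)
Your reconstruction of the noise-plus-jamming floor is sound: under perfect knowledge of $\mathbf{A}_\UR$ the two signal-dependent terms of $\bm{\Psi}_\UR$ in Proposition~\ref{Prop:SE} indeed cancel, leaving $\bm{\Psi}_\UR=\mathbf{I}_{\NR}+\rho_{\mathrm{J}}\esp\{\Frj(\Frj)^H\}$, and evaluating $\esp\{\Frj(\Frj)^H\}$ via the MMSE decomposition $\mathbf{g}_{m\UR,\nr}=\hat{\mathbf{g}}_{m\UR,\nr}+\mathbf{e}_{m\UR,\nr}$ (diagonality, coherent $\NM^2$ term from the nonzero means across jamming MNs, incoherent $\NM$ term from the fluctuations) is the technical core of the result. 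Be aware, though, that the paper does not spell out this proof at all; it only states that it follows the methodology of \cite[Appendix A]{paperzahra}, which is a direct per-antenna SINR analysis of the received signal \eqref{eq:sigUR}, not a specialization of Proposition~\ref{Prop:SE}. Your route is therefore a genuinely different one, and it is precisely where it differs that it breaks.

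The gap is the final aggregation step. With perfect CSI, Proposition~\ref{Prop:SE} yields $\mathrm{SE}_\UR\propto\esp\{\log_2\det(\mathbf{I}_{\NR}+\rho_\UT\bm{\Lambda}_\UR^{1/2}\mathbf{A}_\UR^H\bm{\Psi}_\UR^{-1}\mathbf{A}_\UR\bm{\Lambda}_\UR^{1/2})\}$, a joint-decoding (MMSE-SIC) quantity in which \emph{every} entry of $\mathbf{A}_\UR$, including the cross gains $a_{\nr,\nrl}$ with $\nrl\neq\nr$, enters as useful signal. The target SINR \eqref{eq:sinrnr} instead puts $\rho_\UT\sum_{\nrl\neq\nr}\lambda_{\nrl}|a_{\nr,\nrl}|^2$ in the denominator, i.e., it describes a per-antenna receiver that treats other streams as noise. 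These two quantities are not equal, and the determinant does not collapse to $1+\sum_{\nr}\Gamma_\nr$: writing $\mathcal{I}_\nr$ for the jamming power at receive antenna $\nr$, already for $\NR=2$ with diagonal $\bm{\Psi}_\UR$ the $(1,1)$ entry of $\bm{\Upsilon}_\UR$ equals $\rho_\UT\lambda_1\big(|a_{1,1}|^2/(1+\rho_{\mathrm{J}}\mathcal{I}_1)+|a_{2,1}|^2/(1+\rho_{\mathrm{J}}\mathcal{I}_2)\big)$, so the cross gain $a_{2,1}$ appears as signal rather than interference, and the determinant carries an extra $-|[\bm{\Upsilon}_\UR]_{1,2}|^2$ term with no counterpart in \eqref{eq:ser}. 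Your phrase ``aggregated into a single logarithm'' is thus not a derivation but an assertion of the one identity that fails; note also that $\log_2(1+\sum_\nr\Gamma_\nr)\le\sum_\nr\log_2(1+\Gamma_\nr)$, so the stated form is a strictly different quantity from the per-stream sum your entry-wise reading would naturally produce, and its adoption as the link SE is a modeling convention inherited from \cite{paperzahra} that must be invoked, not deduced from \eqref{eq:SEr}. A secondary bookkeeping point: the mixed moments you cite give $\esp\{|\mathbf{g}_{m\UR,\nr}^H\hat{\mathbf{g}}_{m\UR,\nrl}|^2\}=\NM\beta_{m\UR}\gamma_{m\UR}$ for $\nrl\neq\nr$ (and the matched-column fluctuation is likewise $\NM\beta_{m\UR}\gamma_{m\UR}$), so claiming these assemble into the stated $\gamma_{m\UR}$-only incoherent term of $\mathcal{I}$ glosses over a normalization that your proof would need to make explicit.
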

\begin{proof}
   The proof follows a similar methodology as \cite[Appendix A]{paperzahra} and it is therefore omitted. 
\end{proof}
On the other hand, for the proactive monitoring system, we consider two cases based on the availability of side information, as outlined in the next subsections.\footnote{ In Section \ref{sec:case1}, the superscript (1) stands for the case with imperfect CSI at the MNs and at the CPU, while in Section~\ref{sec:case2}, the superscript (2) stands for the case with imperfect CSI knowledge at the MNs and no CSI knowledge at the CPU.}
\vspace{-0.5em}
\subsection{ Imperfect CSI Knowledge at the MNs and at the CPU}~\label{sec:case1}
In this case, we assume a more centralized architecture where the estimates of the CSI of the \tUT~computed by the MNs in observing mode in \eqref{eq:hatbm} are forwarded to the CPU. Therefore, $\bm{\Theta}_{\MN}^{(1)}=[\hat{\mathbf{B}}_1,\dots,\hat{\mathbf{B}}_{M}]  \in\mathbb{C}^{\NR\times \NR M}$. Note that the elements of $\mathbf{B}_m$ are Gaussian distributed, and thus the MMSE estimates $\hat{\mathbf{B}}_m$ and the corresponding estimation error $\Tilde{\mathbf{B}}_m\triangleq{\mathbf{B}}_m-\hat{\mathbf{B}}_m$ are independent.  Let $\hat{\mathbf{D}}_m\triangleq\sum_{m=1}^M\alpha_m\mathbf{V}_m^H\hat{\mathbf{B}}_m\bm{\Lambda}_\UR^{1/2}$. Hence, using~\eqref{eq:SEm}, the achievable SE at 
the CPU  can be obtained as
\begin{align}\label{eq:se1}
     \mathrm{SE}_{\MN}^{(1)}\!=\!\Big(1-\frac{\tau_\UT+\tau_\UR}{\tau}\Big)\esp\big\{\log_2\big(\det\big(\mathbf{I}_{N_\UR}+\bm{\Upsilon}_{\MN}^{(1)}\big)\big)\big\},
\end{align}
where $\bm{\Upsilon}_{\MN}^{(1)}=\rho_\UT\hat{\mathbf{D}}_m^H(\bm{\Psi}_{\MN}^{(1)})^{-1}\hat{\mathbf{D}}_m$
with
\begin{align}\label{Eq:psi1}
\bm{\Psi}_{\MN}^{(1)}=&~\rho_{\mathrm{J}}\sum_{m=1}^{M}\sum_{l=1}^{M}\alpha_m\alpha_l{\mathbf{V}}_m^H\esp\big\{\Fmj_m(\Fmj_l)^H\big\}{\mathbf{V}}_l\nonumber\\
    &+\sum_{m=1}^{M}\sum_{l=1}^{M}\alpha_m\alpha_l{\mathbf{V}}_m^H{\mathbf{V}}_l+\rho_\UT\esp\big\{\Tilde{\mathbf{D}}_m\Tilde{\mathbf{D}}_m^H\big\},
\end{align}
and  $\Tilde{\mathbf{D}}_m\triangleq\mathbf{D}_m-\hat{\mathbf{D}}_m$.

\vspace{-0.5em}
\subsection{Imperfect CSI Knowledge at the MNs and no CSI Knowledge at the CPU}~\label{sec:case2}
In this case, we assume that the channel estimates computed at each MN are not forwarded to the CPU, hence $\bm{\Theta}_{\MN}^{(2)}=\bm{\varnothing}$. Thus, case-2 models a more lightweight and low-overhead scenario, with \eqref{eq:SEm} being rewritten as
\begin{align}\label{eq:se2}
     \mathrm{SE}_{\MN}^{(2)}=\Big(1-\frac{\tau_\UT+\tau_\UR}{\tau}\Big)\log_2\left(\det\left(\mathbf{I}_{N_\UR}+\bm{\Upsilon}_{\MN}^{(2)}\right)\right),
\end{align}
where $\bm{\Upsilon}_{\MN}^{(2)}=\rho_\UT\esp\big\{\mathbf{D}_m^H\big\}(\bm{\Psi}_{\MN}^{(2)})^{-1}\esp\big\{\mathbf{D}_m\big\}$
with  
\begin{align}\label{Eq:psi2}
    \bm{\Psi}_{\MN}^{(2)}=&~\rho_{\mathrm{J}}\esp\left\{\sum_{m=1}^{M}\sum_{l=1}^{M}\alpha_m\alpha_{l}\mathbf{V}_m^H\Fmj_m(\Fmj_l)^H\mathbf{V}_{l}\right\}\nonumber\\
    &+\rho_\UT\esp\Bigl\{\mathbf{D}_m\mathbf{D}_m^H\Bigr\}+\esp\left\{\sum_{m=1}^{M}\sum_{l=1}^{M}\alpha_m\alpha_l{\mathbf{V}}_m^H{\mathbf{V}}_l\right\}\nonumber\\
    &-\rho_\UT\esp\big\{\mathbf{D}_m\big\}\esp\big\{\mathbf{D}_m^H\big\}.
\end{align}

\begin{remark}
 For case-1, the signaling load required can be quantified as follows: each MN transmits its combined signal to the CPU, $\mathbf{V}_m^H\mathbf{y}_m$, which accounts for a total of $(\tau-(\tau_{\mathrm{t}}+\tau_{\mathrm{r}}))\NR$ complex scalars being transmitted to the CPU by each MN per coherence block. Moreover, since the CPU requires knowledge of the $\NR \times \NR$ complex matrix $\hat{\mathbf{B}}_m$ from the MNs in observing mode, a total of $\NR^2M_{\mathrm{o}}$ statistical parameters are also needed. For case-2, the number of complex scalars to be exchanged between the MNs and CPU per coherence block is also $(\tau-(\tau_{\mathrm{t}}+\tau_{\mathrm{r}}))\NR$. However, as the channel estimates are not forwarded to the CPU, the total number of statistical parameters needed at the CPU is zero~\cite{8845768}. For clarity, these values are tabulated in Table~\ref{tab:sigload}.  
\end{remark}

\begin{table}[t]
\centering
\caption{ Complex scalars and statistical parameters for Case-1 and Case-2}
\begin{tabular}{|c|c|c|}
\hline
\textbf{} & Each coherence block & Statistical parameters \\
\hline
\textbf{Case-1} & $(\tau-(\tau_{\mathrm{t}}+\tau_{\mathrm{r}}))\NR$ & $\NR^2M_{\mathrm{o}}$ \\
\hline
\textbf{Case-2} & $(\tau-(\tau_{\mathrm{t}}+\tau_{\mathrm{r}}))\NR$ & -- \\
\hline
\end{tabular}
\label{tab:sigload}
\end{table}

\vspace{-0.5cm}
\section{Monitoring Success Probability}
 In this section, we start by defining the MSP, which we adopt as the key performance metric to assess the proactive monitoring system.  The MSP quantifies the likelihood that the proactive monitoring system can successfully observe the communication between the UT and UR. Specifically, it is defined as the probability that the SE achieved at the CPU exceeds the SE achieved by the untrusted link. We emphasize that, as discussed in Section~\ref{sec:systemmodel}, the untrusted pair is part of the network infrastructure. Therefore, it is reasonable to assume that the MNs have high-level statistical knowledge of their configuration, which allows the MNs to model the UT-UR effective channel and compute the MSP. Hence, the following indicator function can be employed to denote the event of successful monitoring at the CPU~\cite{paperzahra,7880684}, 
\begin{align}
    \mathrm{MSP}^{(\mathrm{q})}= \begin{cases} 1, & \text{if }\mathrm{SE}_{\MN}^{(\mathrm{q})}\geq\mathrm{SE}_\UR \\ 0, & \text{otherwise},\end{cases}, \mathrm{q} \in \{1,2\},
\end{align}
where $\mathrm{SE}_\UR$, $\mathrm{SE}_{\MN}^{(1)}$, and $\mathrm{SE}_{\MN}^{(2)}$ are given by \eqref{eq:ser}, \eqref{eq:se1} and \eqref{eq:se2}, respectively.
Next, we aim to maximize the MSP performance by optimizing the MN mode assignment coefficients, $\alpha_1, \dots, \alpha_M$, and the jamming power allocation matrices, $\bm{\Pi}_{1\UR}, \dots, \bm{\Pi}_{M \UR}$. Accordingly, let $\mathbf{s}\triangleq[\bm{\alpha},\bm{\pi}_{1\UR},\dots, \bm{\pi}_{M\UR}]$ denote the considered optimization parameters, where $\bm{\alpha}\triangleq[\alpha_1,\dots,\alpha_M]$ and $\bm{\pi}_{m\UR}\triangleq \text{vec}(\bm{\Pi}_{m\UR}), \forall m \in \{1,\ldots, M\}$, with $\text{vec}(\cdot)$ being the vectorization operation. The optimization problem can be formulated as
\begin{subequations}
\begin{align}
\mathcal{P}: \max_{\mathbf{s}} \hspace{3mm} &\mathrm{MSP}^{(\mathrm{q})}, \mathrm{q} \in \{1,2\}~\label{eq:p3} \\
   \text{s. t. }& \eqref{eq:conpi},\nonumber\\
   &\pi_{m\UR,\nr}\geq 0, \forall m, \forall \nr\in\{1, \NR\},\label{eq:bk} \\
   & \alpha_m\in\{0,1\}. \label{eq:maxant}
\end{align}
\end{subequations}
From~\eqref{eq:ser}, \eqref{eq:se1} and \eqref{eq:se2}, note that there is a tight coupling between the optimization variables $\bm{\alpha}$ and $\bm{\pi}_{1\UR}, \dots, \bm{\pi}_{M \UR}$. Moreover, given the dependency between the precoding matrix, $\mathbf{W}$, and the channel matrix between the \tUT~and \tUR, $\mathbf{G}_{\UTUR}$, {the expectations in \eqref{Eq:psi1} and in \eqref{Eq:psi2} cannot be solved or further simplified. Thus, }obtaining a closed-form expression for the MSP is not possible. The difficulty of the problem is further aggravated by its combinatorial nature, given the binary variables $\{\alpha_m\}$ and the continuous parameters $\bm{\pi}_{1\UR}, \dots, \bm{\pi}_{M \UR}$. 
 In such scenarios, gradient-based methods are not suitable as they struggle with the non-differentiability of the objective function and with the discontinuities introduced by binary variables~\cite{tutBayes}. The grid search approach is also not suitable given the computational cost required as the number of MNs increases~\cite{snoek2012practical}. 
Hence, to efficiently address the joint mode selection and jamming power control problem outlined above, we propose an algorithm based on Bayesian optimization. This approach is grounded on the Bayesian inference and thus relies on probability distributions to effectively handle uncertainties in optimization problems, including potential variability within the objective function itself. Bayesian optimization is particularly advantageous in scenarios where the objective function is costly to evaluate,  lacks a known structure like concavity or linearity, or is inherently uncertain, as it builds a probabilistic model to predict promising areas for sampling~\cite{tutBayes}.  For the proposed MSP maximization problem, the Bayesian optimization allows for an analysis with no assumption of factorization or independence between the optimization variables. In the next subsection, we describe the key steps of the Bayesian optimization method.
\vspace{-0.5em}
\subsection{Bayesian Optimization Method}
Bayesian optimization is a sequential model-based method composed of two main components: 1)  a surrogate model, chosen to adequately approximate the objective function in~\eqref{eq:p3} based on a limited number of evaluations, and 2) an acquisition function, used to iteratively update the search in each iteration according to the acquisition function.
\subsubsection{Surrogate Model} 
For the surrogate model, we consider a Gaussian process (GP) regression, which makes use of the mathematical properties of the multivariate normal distribution and can model the behavior of a large variety of functions. Given these characteristics, GPs have been predominately employed in Bayesian optimization algorithms as a prior model of the objective function.  Accordingly, we begin by presenting the formal definition of the GP  below:
 \begin{definition}
 A GP is a collection of random variables, any finite number of which have consistent joint Gaussian distributions~\cite{gpdef}. 
 \end{definition}
Hence, a GP is a nonparametric model, fully characterized by its $\mathcal{S} \times 1$ prior mean function $\bm{\mu}_0$, where $\mathcal{S}$ denotes the search space for optimization parameters $\mathbf{s}$, and by its $\mathcal{S} \times \mathcal{S}$ positive-definite covariance function, also named as kernel, $\mathbf{K}$. Thus, by employing a GP regression, all terms of $\mathbf{\MSP}^{(\mathrm{q})} = [\mathrm{MSP}^{(\mathrm{q})}(\mathbf{s}_1),\dots,\mathrm{MSP}^{(\mathrm{q})}(\mathbf{s}_{N_{\mathrm{opt}}})]$ are assumed to be jointly Gaussian, where $N_{\mathrm{opt}}$ is the number of observations, while the observed outputs $\mathbf{f}= [f_1,\dots,f_{N_{\mathrm{opt}}}]$ are normally distributed given $\mathbf{\MSP}$, that is 
    \begin{align}
        \mathbf{\MSP}^{(\mathrm{q})} | \mathbf{s} &\sim \mathcal{N}(\bm{\mu}_0, \mathbf{K}),  \mathrm{q}\in\{1,2\},\\
        \mathbf{f} | \mathbf{\MSP}^{(\mathrm{q})}, \sigma^2 &\sim \mathcal{N}(\mathbf{\MSP}^{(\mathrm{q})}, \sigma^2\mathbf{I}_{N_{\mathrm{opt}}}),
    \end{align}
where $\sigma^2$ is the observation noise variance. Given a set of observations $\mathcal{D}_{N_{\mathrm{opt}}} = \{\mathbf{s}_i, f_{i}\}_{i=1}^{N_{\mathrm{opt}}}$, the posterior mean and variance can be obtained via Bayes' rule as
\begin{align}
   \mu_{N_{\mathrm{opt}}}(\mathbf{s}) &= \mu_0(\mathbf{s}) + \mathbf{k}(\mathbf{s})^T(\mathbf{K}+\sigma^2\mathbf{I}_{N_{\mathrm{opt}}})^{-1}(\mathbf{f}-\bm{\mu}_0),\label{eq:mukernel} \\
   \sigma_{N_{\mathrm{opt}}}^2(\mathbf{s}) &= k(\mathbf{s},\mathbf{s}) - \mathbf{k}(\mathbf{s})^T(\mathbf{K}+\sigma^2\mathbf{I}_{N_{\mathrm{opt}}})^{-1}\mathbf{k}(\mathbf{s}),\label{eq:sigkernel}
\end{align}
where $\mathbf{k}(\mathbf{s})$ is a vector of covariance terms between $\mathbf{s}$ and $\mathbf{s}_1,\dots, \mathbf{s}_{N_{\mathrm{opt}}}$. The prior mean $\bm{\mu}_0$ provides a possible offset for the samples from the GP, with its value usually set as a constant. On the other hand, the selected kernel $\mathbf{K}$ can highly impact the performance of the GP regression, as it determines the shape of the prior and posterior of the GP. Kernels are mainly classified into two categories: non-stationary and stationary kernels~\cite{7352306}. Non-stationary kernels take into account both the distance between two observed data points and the absolute value of each data point, while stationary kernels depend only on the distance between the observed data points, making them shift-invariant. The stationary kernels can be further classified as an-isotropic and isotropic, with the isotropic kernels being also invariant to rotations~\cite{gpml}. It is important to point out that most stationary kernels depicted in the literature are suitable only for continuous search spaces~\cite{bookBayes}. Thus, to account for the different nature of the optimization variables considered in our optimization problem~\eqref{eq:p3}, i.e., the binary variable $\bm{\alpha}$ for mode assignment and the continuous $\bm{\pi}_{1\UR}, \dots, \bm{\pi}_{M \UR}$ for jamming power control, two separate kernels are considered, denoted as $k_\alpha(\bm{\alpha},\bm{\alpha}')$ and $k_\pi(\bm{\pi}_{m\UR},\bm{\pi}_{m\UR}')$ $\forall m \in \{1, \ldots, M\}$, respectively.
More specifically, $k_\pi(\bm{\pi}_{m\UR},\bm{\pi}_{m\UR}')$  is assumed to follow the Mat\'ern kernel model, a widely used model given its versatility, which is written as 
\begin{align}\label{eq:mk}
    k_\pi(\bm{\pi}_{m\UR},\bm{\pi}_{m\UR}') =&~ \frac{2^{1-\nu}}{\Gamma(\nu)}\left(\sqrt{2\nu}\frac{||\bm{\pi}_{m\UR}-\bm{\pi}_{m\UR}'||}{\iota}\right)\nonumber\\
    &\times B_\nu \left(\sqrt{2\nu}\frac{||\bm{\pi}_{m\UR}-\bm{\pi}_{m\UR}'||}{\iota}\right), 
\end{align}
where $\nu > 0$ is a smoothness parameter, defined as such because it indicates that samples from a GP with Mat\'ern kernel are $[\nu -1]$ times differentiable. Moreover, $\iota$ is the characteristic lengthscale, $\Gamma(\cdot)$ is the Gamma function, and $B_\nu(\cdot)$ is a modified Bessel function of the second kind. For $k_\alpha(\bm{\alpha},\bm{\alpha}')$, we assume a modified Mat\'ern kernel, by considering an integer search space, meaning that $k_\alpha(\bm{\alpha},\bm{\alpha}')$ is modeled as \eqref{eq:mk}. Therefore, $\bm{\alpha}$ is treated as a continuous variable during the optimization process but can only take integer values limited by the constraint~\eqref{eq:maxant}. Accordingly, the overall kernel function is given by 
\begin{align}
    k(\mathbf{s},\mathbf{s}') = k_{\alpha,\pi}\left(\left(\bm{\alpha}, \bm{\pi}_{m\UR}\right),\left(\bm{\alpha}', \bm{\pi}_{m\UR}' \right)\right).
\end{align}




\subsubsection{Acquisition Function}
As described in~\cite{bookBayes}, the selected acquisition function assigns a score to each point in the domain, reflecting the preferences over the locations for the next iteration of the algorithm. Therefore, it should be carefully designed to efficiently explore the search space for promising areas for the next sample of the optimization variables. Given a GP prior, the acquisition functions are commonly based on three parameters: the mean of the optimization variables, the standard deviation of the objective function, and the best-attained value obtained in previous iterations of the optimization algorithm. Traditional acquisition functions are mostly improvement-based (e.g., probability of improvement (PI) and expected improvement (EI)), optimistic-based (e.g., lower confidence bound (LCB)), and information-based (e.g., Thompson sampling (TS) and entropy search (ES))~\cite{brochu}. Nonetheless, it is valid to point out that no single acquisition strategy is capable of providing the best performance in all instances. Hence, a preferred approach would be to change the employed acquisition function throughout the iterations of the optimization process. Accordingly, Hoffman \textit{et al.}~\cite{brochu} proposed an acquisition policy named GP-Hedge, based on the Hedge algorithm in which at each iteration of the optimization algorithm, several acquisition functions deliver candidates for the next sampling points, and a meta-criterion is employed to select the next sampling point among the proposed candidates.


Given the characteristics of the Bayesian optimization described above, the pseudo-code for the proposed algorithm is depicted in \textbf{Algorithm~1}, at the top of the page.

\begin{algorithm}[t]\label{alg:bayesian}
\caption{Bayesian Optimization Algorithm}
\begin{algorithmic}[1]
\footnotesize
\Require{\eqref{eq:p3}, a GP model, acquisition function, number of iterations of the initial phase $N_{\mathrm{initial}}$, number of iterations of the optimization $N_{\mathrm{opt}}$}
\State MSP$^* \gets 0$
\For{$i \gets 1$ to $N_{\mathrm{initial}}$}
\State Generate valid samples of  $\mathbf{S}_i$
\State Compute MSP given the initial $\mathbf{S}_i$
\If{MSP$\geq$  MSP$^*$}
\State $\mathbf{S}^* \gets \mathbf{S}_i$
\State MSP$^* \gets$ MSP
\EndIf
\EndFor
\State Update the GP model with the initial samples  
\For{$i \gets N_{\mathrm{initial}} + 1$ to $N_{\mathrm{opt}}$}
\State Obtain next sampling point according to the acquisition function:
\State $\mathbf{S}_i \gets \mathbf{S}_\mathrm{next}$ 
\State Update the GP model with the updated sampling
\State Compute MSP given $\mathbf{S}_i$
\If{MSP$\geq$  MSP$^*$}
\State $\mathbf{S}^* \gets \mathbf{S}_i$
\State MSP$^* \gets$ MSP
\EndIf
\EndFor
\Ensure{$\mathbf{S}^*$, MSP$^*$}  
\end{algorithmic}
\end{algorithm}
\vspace{-0.5em}
\subsection{Complexity and Convergence Analysis}
The main computational complexity of \textbf{Algorithm~1} comes from updating the GP-based surrogate model, with the order of $\mathcal{O}(N_{\mathrm{opt}}^3)$~\cite{7352306}. The cost is mainly due to the inversion of the covariance matrix in \eqref{eq:mukernel} and \eqref{eq:sigkernel}. It is worthwhile to point out that in most Bayesian optimization-based schemes, including the one proposed in \textbf{Algorithm~1}, the number of observations $N_{\mathrm{opt}}$ is small. Nonetheless, several attempts have been proposed in the literature to reduce the computational burden of updating the surrogate model for cases when $N_{\mathrm{opt}}$ is large. Options to reduce the computational cost include sparsification techniques for  GPs, such as sparse pseudo-input GPs~\cite{NIPS2005_4491777b} and sparse spectrum GPs~\cite{JMLR:v11:lazaro-gredilla10a}.

The convergence of the Bayesian optimization-based algorithm primarily depends on the chosen acquisition function and is evaluated based on Bayesian regret. As described in~\cite{brochu}, the GP-hedge approach complicates the assessment of the convergence behavior of \textbf{Algorithm~1}, as decisions made at each iteration 
 influence the problem state and the selection criterion for the acquisition function in all subsequent iterations. Therefore, the regret attained with \textbf{Algorithm~1} cannot be directly related to the regret of one of the acquisition functions considered by the GP-hedge approach. Under these considerations, in \cite[Theorem 1]{brochu}, a bound on the cumulative regret of the Bayesian optimization algorithm employing GP-Hedge was derived which is displayed below for the sake of completeness:
\vspace{-0.5em}
    \begin{align}
        R_{N_{\mathrm{opt}}}\leq \sqrt{{N_{\mathrm{opt}}} C_1\gamma_{N_{\mathrm{opt}}}}+\sum_{i=1}^{N_{\mathrm{opt}}}  \sigma_{i-1}\mathbf{s}_i+\mathcal{O}(\sqrt{{N_{\mathrm{opt}}}}),
    \end{align}
    where $\gamma_{N_{\mathrm{opt}}}$ is a bound on the information gained at points selected by the algorithm after $N_{\mathrm{opt}}$ iterations, whereas $C_1 = 2/\log(1+\sigma^{-2})$, and $\mathbf{s}_i$ is the $i$th optimization vector proposed by the proposed algorithm. 

\section{Numerical Results}\label{sec:Results}
In this section, numerical results are presented to exploit the performance of our CF-mMIMO proactive monitoring system under different available CSI scenarios, and to evaluate the performance of the proposed Bayesian optimization-based algorithm to maximize the MSP of  CF-mMIMO-based proactive monitoring.
\vspace{-1em}
\subsection{Simulation Setup and Parameters}
We consider a CF-mMIMO proactive monitoring system where the MNs, the \tUT, and the \tUR~are uniformly distributed within a $\mathrm{D}\times \mathrm{D}$ km$^2$ area. The wrapped-around technique is used to avoid the boundary effects. The channel bandwidth is set to $B = 20$ MHz, the pilot length of $\tau_\UR$ and $\tau_\UT$ are set to 40, and the coherence interval is $\tau = 300$. The maximum transmit powers at the \tUR, \tUT, and  MNs in jamming mode are set as $100$ mW, $100$ mW, and $200$ mW, respectively. The corresponding normalized transmit powers $\rho_\UR$, $\rho_\UT$, and $\rho_\mathrm{J}$, are calculated by dividing the maximum transmit power by the noise power, given by
\begin{align}
    \text{noise power} = B \times k_B \times T_0 \times \text{noise figure }  (\mathrm{W}),
\end{align}
where $k_B = 1,381 \times 10^{-23}$ (Joule per Kelvin) is the Boltzmann constant, $T_0 = 290$ (Kelvin) is the noise temperature, and the noise figure is 9 dB. The large-scale fading coefficients $\beta_{m\UR}$, $\beta_{m\UT}$, $\beta_{\UTUR}$ and $\beta_{mm'}$ are computed as
\begin{align}
    \beta_{\mathrm{l}} = \mathrm{PL}_{\mathrm{l}} 10^{\frac{\sigma_{\mathrm{sh}}z_{\mathrm{l}}}{10}}, ~\mathrm{l} \in \{m\UR, m\UT, \UTUR, mm'\},
\end{align}
where $\mathrm{PL}_{\mathrm{l}}$ represents the path loss, and $10^{\frac{\sigma_{\mathrm{sh}}z_{\mathrm{l}}}{10}}$ represents the shadow fading with the standard deviation $\sigma_{\mathrm{sh}} = 8$ dB, and $z_{\mathrm{l}} \sim \mathcal{N}(0, 1)$. Based on~\cite{7827017}, let $d_{\mathrm{l}}$ denote the distance between the nodes, such that the path loss $\mathrm{PL}_{\mathrm{l}}$ is computed as
\begin{align}
    \mathrm{PL}_{\mathrm{l}}= \begin{cases}-L-35 \log _{10}\left(d_{\mathrm{l}}\right), & d_{\mathrm{l}}>d_1 \\ -L-15 \log _{10}\left(d_1\right)-20 \log _{10}\left(d_{\mathrm{l}}\right), & d_0<d_{\mathrm{l}} \leq d_1 \\ -L-15 \log _{10}\left(d_1\right)-20 \log _{10}\left(d_0\right), & d_{\mathrm{l}} \leq d_0,\end{cases}
\end{align}
\vspace{-0.5em}
where
\begin{align}
L &\triangleq~46.3+33.9 \log _{10}(f)-13.82 \log _{10}\left(h_{\mathrm{MN}}\right) \nonumber\\
& -\left(1.1 \log _{10}(f)-0.7\right) h_{\mathrm{u}}+\left(1.56 \log _{10}(f)-0.8\right),
\end{align}
with $f = 1.9$ GHz being the carrier frequency, $h_{\mathrm{MN}} = 15$ m and $h_{\mathrm{u}} = 1.65$ m are the antenna heights of the MNs and of the untrusted users, respectively. Moreover, $d_0 = 10$ m and $d_1 = 50$ m. The per stream SNR in \eqref{eq:vmtil} is set as $\varrho=1/\rho_\UT$, while ZF and MRT precoding techniques are considered for the transmit precoding matrix $\mathbf{W}$ at \tUT, that is,
\begin{align}
    \mathbf{W} &= \hat{\mathbf{G}}_{\UTUR}(\hat{\mathbf{G}}_{\UTUR}^H\hat{\mathbf{G}}_{\UTUR})^{-1}, ~ \text{for ZF,}\\
    \mathbf{W} &= \hat{\mathbf{G}}_{\UTUR}, ~ \text{for MRT}.
\end{align}
Further, unless stated otherwise, $\mathrm{D} = 1$ Km, $M = 8$, $N = 30$, and $N_\UT = N_\UR = 4$. Finally, regarding the Bayesian optimization-based algorithm, the simulations are built using the  \textit{gp\_minimize} library of Python and the results are obtained on a Dell laptop with Intel Core$^{\mathrm{TM}}$ $i7-1365U$ and RAM of $16$ GB. The number of iterations of the initial phase is $N_{\mathrm{initial}} = 10$, while the number of iterations of the optimization is $N_{\mathrm{opt}} = 20$.

Next, we evaluate the performance of the proposed CF-mMIMO proactive monitoring system relying on the optimized MSP with \textbf{Algorithm~1}  under two cases of CSI availability at the MNs and CPU: \textbf{Case-1}: imperfect CSI knowledge at both the MNs and CPU, and \textbf{Case-2}: imperfect CSI knowledge at the MNs with no CSI knowledge at the CPU.
 
\subsection{Performance Evaluation}

\subsubsection{Comparison between CF-mMIMO and Co-located mMIMO}
\begin{figure}[t]
\centering
  \begin{subfigure}[t]{0.45\textwidth}
    \includegraphics[width=\textwidth]{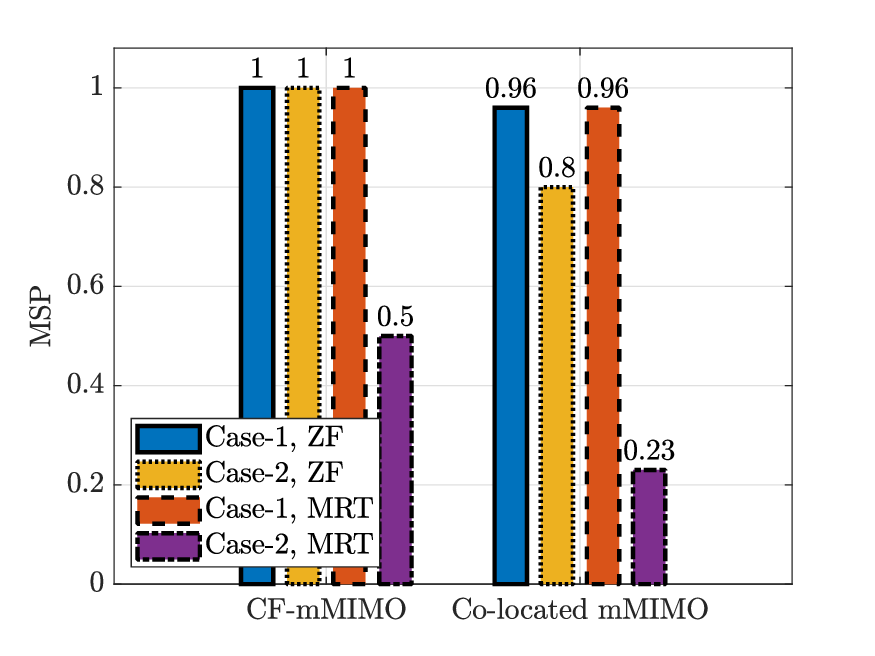}
    \caption{$\mathrm{D} = 0.5$ Km}
    \label{fig:MSPvsDzf}
  \end{subfigure}
  \begin{subfigure}[t]{0.45\textwidth}
    \includegraphics[width=\textwidth]{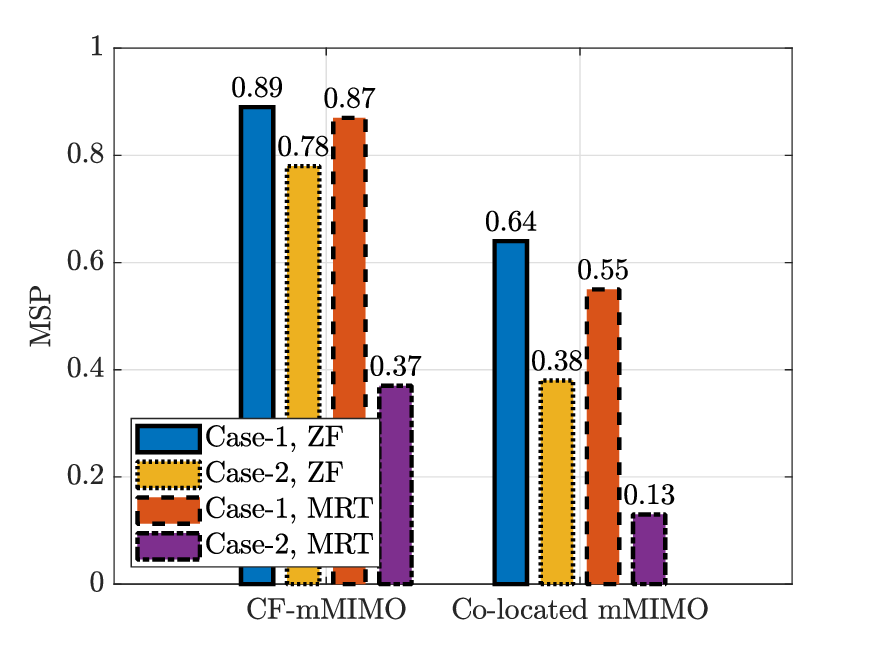}
    \caption{$\mathrm{D} = 1.5$ Km}
    \label{fig:MSPvsDmrt}
  \end{subfigure}
  \caption{MSP versus the area size, $\mathrm{D}$ (Km). }
\vspace{-0.52cm}
\label{fig:MSPvsD}
\end{figure}

Figure~\ref{fig:MSPvsD} illustrates the MSP performance of the  CF-mMIMO proactive monitoring system relying on our proposed CSI acquisition scheme and joint MNs mode assignment and power control optimization in \textbf{Algorithm~1} versus the area size, $\mathrm{D}$.   For comparison, we present results for a co-located mMIMO-aided proactive monitoring system operating in FD mode, where all MNs are arranged in a co-located antenna array and simultaneously perform observation and jamming. For the co-located mMIMO system, a residual self-interference level of $30$ dB is assumed.\footnote{ After employing self-interference mitigation techniques, the strength of the residual self-interference in co-located mMIMO system can range from 30 to 100 dB~\cite{10258345}.} Moreover, half of the antennas at the co-located MNs are dedicated to jamming transmissions, while the remaining antennas are employed to observe the untrusted communication link.  For a fair comparison, the jamming power allocation for the co-located mMIMO system is also optimized with the proposed Bayesian optimization approach. Specifically, in Fig.~\ref{fig:MSPvsD}(a), the results are presented considering an area size of $\mathrm{D}= 0.5$ Km, while   Fig.~\ref{fig:MSPvsD}(b) displays the results with an area size of $\mathrm{D}= 1.5$ Km. Notably, the CF-mMIMO proactive monitoring system outperforms the co-located mMIMO system, irrespective of the size of the area, the CSI availability at the CPU, or the selected precoding scheme at the \tUT. The performance gap between them widens as the area size increases. For instance, for $\mathrm{D} = 0.5$ Km, the CF-mMIMO system provides around $5$\%  and $20$\% improvement in the MSP performance in comparison to the co-located system for \textbf{Case-1} and \textbf{Case-2}, respectively. For $\mathrm{D} = 1.5$ Km, the corresponding MSP performance gains increase to $25$\% and $40$\%, respectively. This is explained by the macro-diversity attained with the CF-mMIMO system, with multiple distributed MNs surrounding the \tUT~and the \tUR~in contrast to the co-located system, where no macro-diversity gain is achieved. Moreover, the monitoring performance of the CF-mMIMO system is better when the \tUT~utilizes the ZF precoding scheme compared to when the MRT scheme is employed.

\subsubsection{Comparison Between Algorithm~1 and Benchmark Schemes}

\begin{figure}[t]
\centering
  \begin{subfigure}[t]{0.45\textwidth}
    \includegraphics[width=\textwidth]{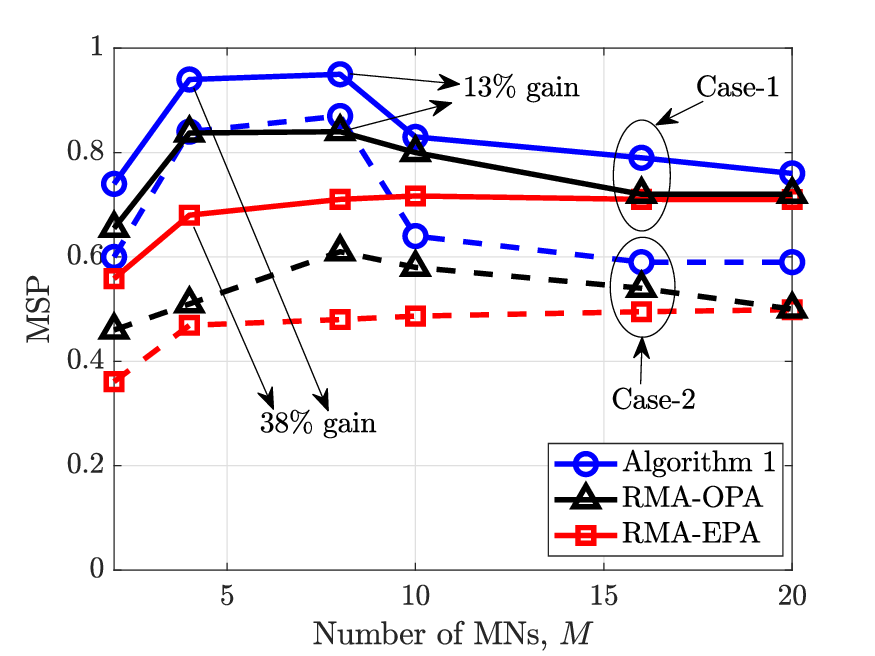}
    \caption{ZF precoding}
    \label{fig:MSPvsMzf}
  \end{subfigure}
  \begin{subfigure}[t]{0.45\textwidth}
    \includegraphics[width=\textwidth]{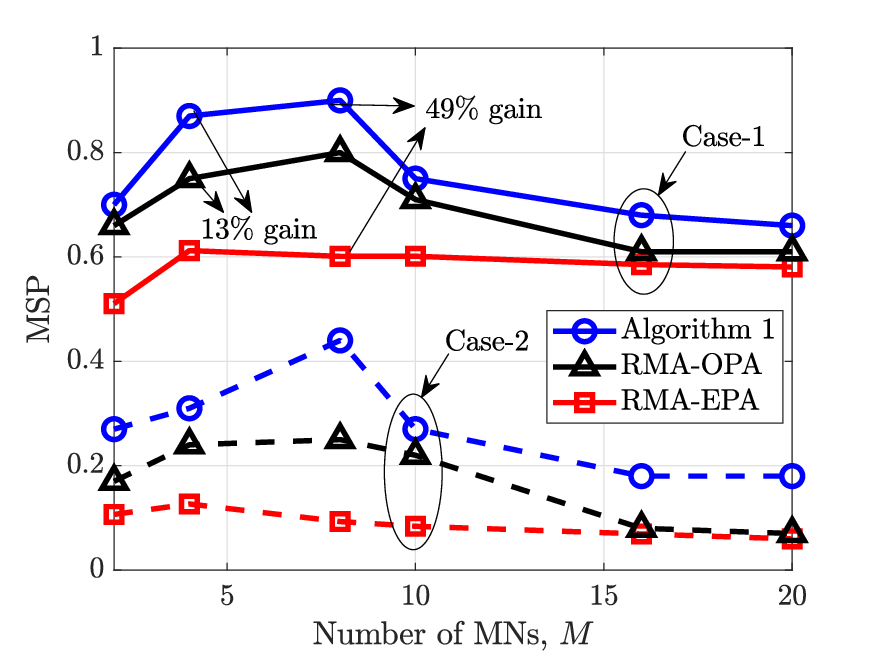}
    \caption{MRT precoding}
    \label{fig:MSPvsMmrt}
  \end{subfigure}
  \caption{ MSP versus the number of MNs, $M$, with the total number of antennas for all the MNs fixed as 240.}
\vspace{-0.52cm}
\label{fig:MSPvsM}
\end{figure}

In Fig.~\ref{fig:MSPvsM}, we evaluate the MSP performance of a CF-mMIMO proactive monitoring system as a function of the number of MNs, with the total number of antennas across all MNs fixed as $240$. 
In this figure, we compare our optimized approach in \textbf{Algorithm~1} with {the random mode assignment and optimized power allocation (RMA-OPA) scheme, where the MNs are randomly assigned to either observing mode or jamming mode, whereas the jamming power control is optimized according to \textbf{Algorithm~1}. We also depict RMA and equal power allocation (RMA-EPA), where the jamming power in \eqref{eq:conpi} is distributed equally across all MNs in jamming mode. We observe that our proposed joint mode assignment and jamming power control with \textbf{Algorithm~1}  shows the best MSP performance over all evaluated schemes, followed by the RMA-OPA scheme. Both schemes significantly enhance the monitoring performance compared to the RMA-EPA in all cases. Accordingly, even though the optimized scheme has a complexity of $\mathcal{O}(N_{\mathrm{opt}}^3)$, it achieves over 38\% and 49\% MSP performance gain in comparison to the RMA-EPA counterpart for ZF and MRT precoding designs at the UT, respectively. Notably, as the number of MNs in the proactive monitoring system increases, the MSP remains relatively stable under the RMA-EPA scheme. In contrast, for the optimized and for the RMA-OPA, the MSP is maximized when $M = 8$.  Moreover, note that the setup considered in Fig.~\ref{fig:MSPvsM} is different to the one evaluated in Section~\ref{sec:largem}. In Section~\ref{sec:largem}, we aim to evaluate the asymptotic behavior of the system as we increase the number of MNs in either observing or jamming mode, for a fixed number of MNs for the non-evaluated mode. Meanwhile in Fig.~\ref{fig:MSPvsM}, the number of MNs in observing and jamming mode increase simultaneously. Hence, the obtained results here can be explained by the fact that for a larger number of MNs, the macro-diversity gain increases and the path-loss decreases. On the other hand, since the total number of antennas at the MNs is fixed, as $M$ gets larger, $N$ reduces, resulting in a smaller array gain. In particular, note that the performance of the RMA-OPA scheme is closer to that of the RMA-EPA scheme for $M \geq 16$, which emphasizes that the system benefits from the macro-diversity gain achieved with the optimized mode assignment. For the RMA-EPA design, the benefit of increasing the macro-diversity gain is consistently dominant, whereas for the optimized design, the trade-off between increasing the macro-diversity and decreasing the array gain has a clear impact on MSP performance.}

\subsubsection{Comparison Between the Perfect CSI Scenario at the MNs and CPU, and Case 1 and Case 2}

\begin{figure}[t]
    \centering
    \includegraphics[width=0.45\textwidth]{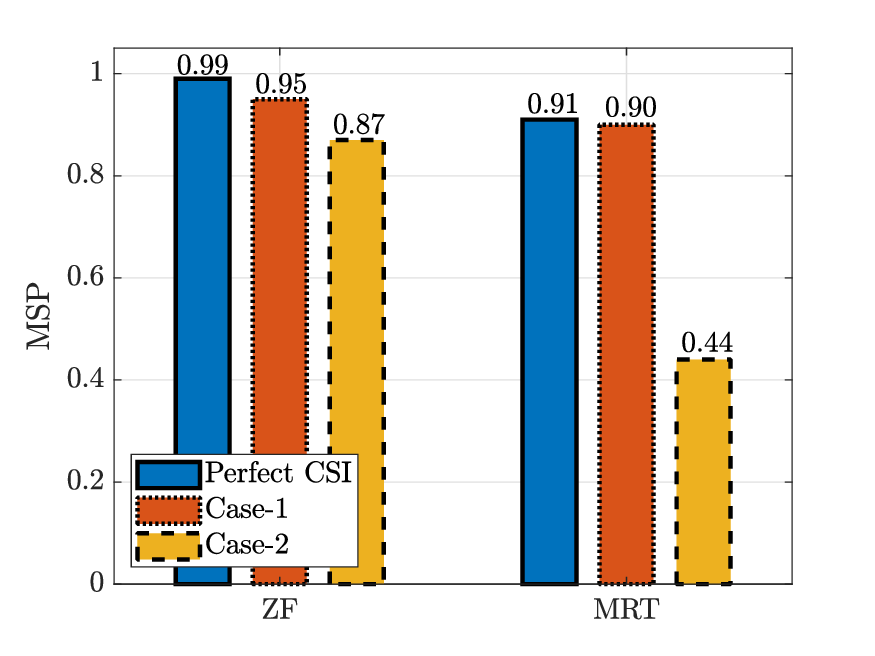}
    \caption{ MSP performance for different cases of CSI availability.}
    \label{fig:MSPvscases}
    \vspace{-0.52cm}
\end{figure}

 Figure~\ref{fig:MSPvscases} illustrates the MSP performance for the different  CSI availability cases and precoding schemes at \tUT. For performance comparison, we also illustrate the results for the ideal monitoring scenario, in which perfect CSI knowledge of the untrusted communication link is available at the MNs and at the CPU. As expected, the perfect CSI case presents an upper bound on the MSP performance. Nevertheless, the performance gap between the perfect CSI case and \textbf{Case-1} is small for either precoding scheme at the \tUT. This result highlights the effectiveness of our proposed acquisition approach. Moreover, it also emphasizes that, even though there is a number of statistical parameters that needs to be transmitted to the CPU in \textbf{Case-1} in comparison to \textbf{Case-2}, the performance is nearly optimal for the former. Therefore, the added complexity is justified by the performance gain achieved with \textbf{Case-1}.

\subsubsection{Impact of the Number of Antennas at the MNs}

\begin{figure}[t]
    \centering
    \includegraphics[width=0.45\textwidth]{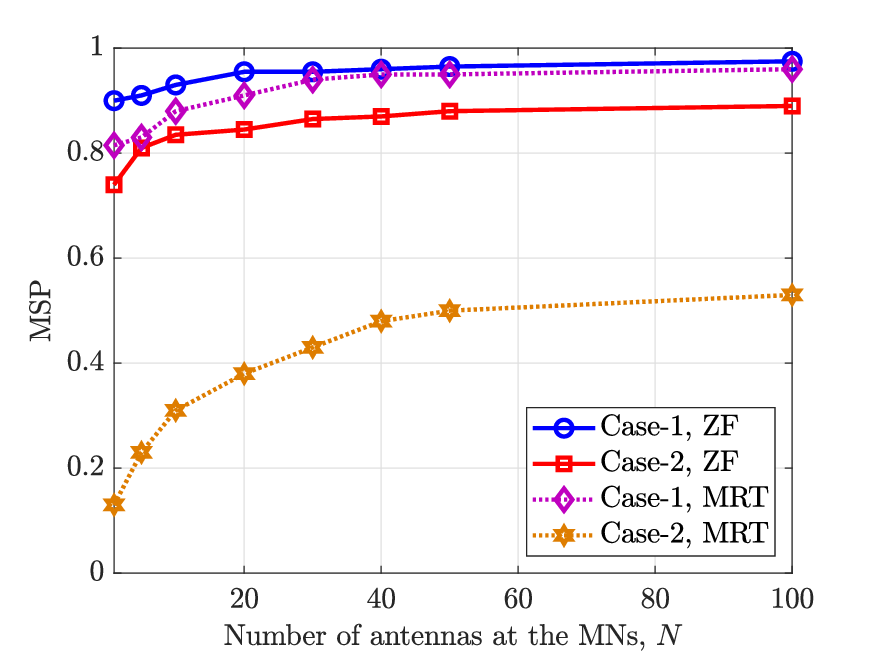}
    \caption{MSP  versus the number of antennas at the MNs, $\NM$.}
    \label{fig:MSPvsN}
\end{figure}

Figure~\ref{fig:MSPvsN} illustrates the MSP as a function of the number of antennas at the MNs, $N$, for the proposed joint mode assignment and jamming power control optimization in \textbf{Algorithm~1}. For \textbf{Case-1}, increasing  $N$ from $1$ to $30$ yields an MSP improvement of approximately 
$8\%$ for the ZF precoding scheme and  $15\%$ for the MRT precoding scheme. For both schemes in \textbf{Case-1}, the MSP stabilizes for $N$ above $30$. Conversely, in the absence of CSI at the CPU, i.e., \textbf{Case-2}, the MSP performance gain rises to around  $15\%$ for the ZF precoding and  $40\%$ for the MRT precoding with $N$ ranging from 1 to 50.  A similar asymptotic trend is observed when  $N$ exceeds $50$. Figure~\ref{fig:MSPvsN} further demonstrates that the influence of increasing  $N $ is more pronounced in the CF-mMIMO system when CSI is unavailable.  

\subsubsection{Impact of the Number of Antennas at the \tUR}

\begin{figure}[t]
    \centering
    \includegraphics[width=0.45\textwidth]{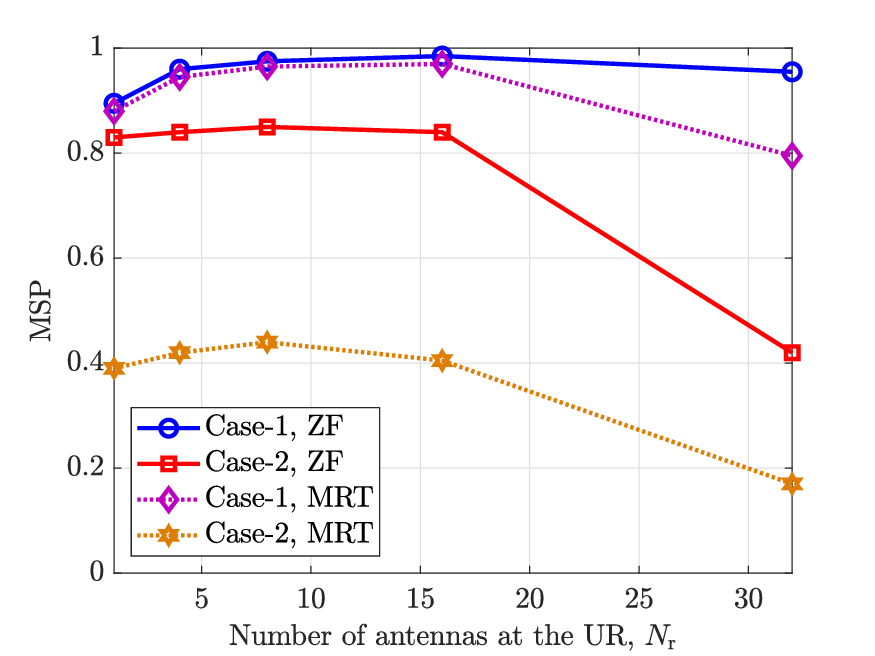}
    \caption{MSP  versus the number of antennas at the \tUR, $\NR$, with $\NT=\NR$.}
    \label{fig:MSPvsNr}
    \vspace{-0.52cm}
\end{figure}

Figure~\ref{fig:MSPvsNr} illustrates the MSP versus the number of antennas at the \tUR, $\NR$, assuming an equal number of antennas at the \tUT, that is $\NT=\NR$. We observe that there is a slight performance improvement for all the evaluated cases as the number of antennas at the untrusted pair increases from a range of $\NR=1$ to $\NR=16$. This is due to the fact that as the number of antennas at the untrusted pair increases, the quality of the channel estimations performed at the uplink training and beamforming training phases is enhanced. Moreover, based on the SINR expression for the \tUR, given in \eqref{eq:sinrnr},  as the number of antennas at the \tUR~gets larger, the received signal strength from the \tUT~improves. However, this also leads to an increase in interference from other antennas and from the jamming signal transmitted by the MNs in the jamming mode at the UR. On the other hand, for $\NR$ greater than $16$, the monitoring performance of all cases decreases, which is expected given that the numerator in \eqref{eq:sinrnr} becomes significantly larger than the interference terms in the denominator. Therefore, the SINR at the \tUR~increases and hence, MSP decreases. Interestingly, with imperfect CSI knowledge at both MSP and CPU, the MSP performance of our CF-mMIMO proactive monitoring system remains high (greater than $0.8$), regardless of the number of antennas at the untrusted nodes or the precoding scheme for the untrusted transmission. This result highlights the effectiveness of our proposed CSI acquisition scheme and joint optimization approach.

\subsubsection{Impact of the Transmit Power at the MNs in Jamming Mode}

\begin{figure}[t]
    \centering
    \includegraphics[width=0.45\textwidth]{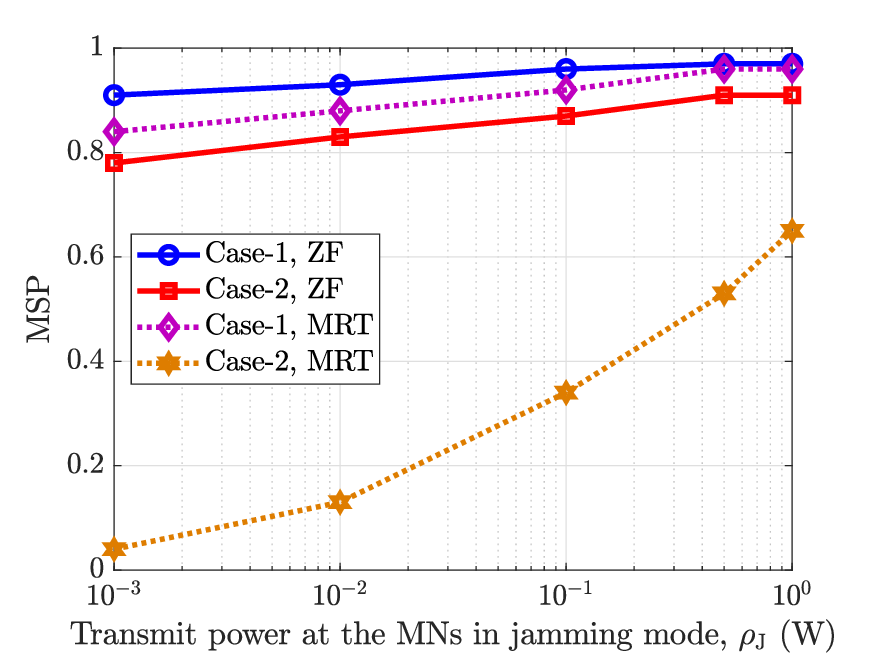}
    \caption{ MSP  versus the transmit power at the MNs in jamming mode, $\rho_{\mathrm{J}}$ (W).}
    \label{fig:MSPvspowerm}
    \vspace{-0.52cm}
\end{figure}

 Finally, Fig.~\ref{fig:MSPvspowerm} shows the MSP as a function of the transmit power at the MNs in jamming mode, $\rho_{\mathrm{J}}$.  Note that increasing $\rho_{\mathrm{J}}$ yields a better performance for all evaluated cases. Precisely, for \textbf{Case-1}, changing $\rho_{\mathrm{J}}$ from 1 mW to 1 W, provides a MSP performance gain of 7\% for the ZF precoding scheme, and of 12\% for the MRT precoding scheme. On the other hand, for \textbf{Case-2}, the MSP improvement is of 13\% for the ZF precoding scheme and of 61\% for the MRT precoding for the same range of $\rho_{\mathrm{J}}$. Note that for both schemes in \textbf{Case-1} and for \textbf{Case-2} with ZF precoding, the MSP stabilizes for a $\rho_{\mathrm{J}}$ value larger than 200 mW.

\section{Conclusions}\label{sec:Conclusions}
We exploited the MSP performance of a CF-mMIMO proactive monitoring system with multi-antenna MNs designed to monitor a multi-antenna untrusted pair, utilizing an efficient CSI acquisition approach. To optimize the monitoring performance, we developed a joint mode assignment and jamming power control algorithm using the Bayesian optimization approach. The MSP performance of the CF-mMIMO proactive monitoring system was evaluated under two CSI availability cases. Our findings demonstrated that the proposed joint optimization algorithm provides significant MSP performance gains over the RMA-EPA approach, particularly under conditions of imperfect CSI at both the MNs and CPU. Furthermore, the results confirmed that the MSP performance of the CF-mMIMO proactive monitoring system remains robust across various network configurations, showing resilience to different precoding schemes for the untrusted transmission, and also to variations in the number of antennas at the untrusted nodes. Moreover,  we validated that the CF-mMIMO proactive monitoring system, coupled with our CSI acquisition scheme and optimization algorithm, substantially outperforms the co-located mMIMO monitoring system.
 Potential future works could include a CSI acquisition scheme for the channel between the MNs. With this, the inter-MN interference can be reduced. Nonetheless, that scenario could entail new security risks, as the untrusted pair may attempt to estimate its channel with the MNs as well.  
\vspace{-0.3cm}
\section*{Appendix A \\Proof of Proposition~1}
The MMSE estimate of $\bmest$ can be computed as
\begin{align}\label{eq:bma}
\hat{\mathbf{b}}_{\nm}&=\esp\bigl\{\bmest\bigr\}+\mathbf{C}_{\bmest,\ymtest}\mathbf{C}_{\ymtest\ymtest}^{-1}\left(\ymtest-\esp\bigl\{\ymtest\bigr\}\right),
\end{align}
where $\mathbf{C}_{\bmest,\ymtest}$ is given by
\begin{align}\label{eq:cbm1}
    \mathbf{C}_{\bmest,\ymtest}=\esp\left\{\left(\bmest-\esp\bigl\{\bmest\bigr\}\right)\left(\ymtest-\esp\bigl\{\ymtest\bigr\}\right)^T\right\}. 
\end{align}
Replacing \eqref{eq:ynmd} into \eqref{eq:cbm1}, and given that $\omegam$ has i.i.d. $\mathcal{CN}(0,1)$ entries and is independent of $\bmest$, \eqref{eq:cbm1} can be rewritten as
\begin{align}\label{eq:cbm2}
   \mathbf{C}_{\bmest,\ymtest} &=\sqrt{\tau_\UT\rho_\UT}\esp\Bigl\{\bmest\bmest^T\!-\!\bmest\esp\bigl\{\bmest^T\bigr\}\!-\!\esp\{\bmest\}\bmest^T\nonumber\\&+\!\esp\bigl\{\bmest\bigr\}\esp\bigl\{\bmest^T\bigr\}\Bigr\}=\sqrt{\tau_\UT\rho_\UT}\mathbf{C}_{\bmest,\bmest}.
\end{align}
Analogously to \eqref{eq:cbm1}, $\mathbf{C}_{\ymtest,\ymtest}$ can be rewritten as
\begin{align}\label{eq:cym}
     \mathbf{C}_{\ymtest,\ymtest}&=\esp\left\{\left(\ymtest-\esp\bigl\{\ymtest\bigr\}\right)\left(\ymtest-\esp\bigl\{\ymtest\bigr\}\right)^T\right\} \nonumber\\
    &=\tau_\UT\rho_\UT\esp\Bigl\{\bmest\bmest^T\!-\!\bmest\esp\bigl\{\bmest^T\bigr\}\!-\!\esp\bigl\{\bmest\bigr\}\bmest^T\nonumber\\
    &+\!\esp\bigl\{\bmest\bigr\}\esp\bigl\{\bmest^T\bigr\}\Bigr\}\!+\!\esp\Bigl\{\omegam\omegam^T\!\!\!+\!\esp\bigl\{\omegam\bigr\}\esp\bigl\{\omegam^T\bigr\}\Bigr\}\nonumber\\
    &=\tau_\UT\rho_\UT\mathbf{C}_{\bmest,\bmest}\!+\!\mathbf{I}_{\NR}.
\end{align}
By substituting \eqref{eq:cbm2} and \eqref{eq:cym} into \eqref{eq:bma}, we obtain \eqref{eq:hatbm}.
\section*{Appendix B \\Proof of Proposition~\ref{prop:largemo}}
As the signal components in~\eqref{eq:dc},~\eqref{eq:nc}, and~\eqref{eq:jc} are independent, the convergence of each of them can be evaluated separately. First, by using the law of large numbers and the Chebyshev's inequality, we have that the noise, $\mathbf{n}_\MN$,  and inter-MN interference, $\mathbf{i}_\MN$, converge as
\begin{align}
    &\frac{1}{M_{\mathrm{o}}}\sum_{m=1}^{M}\alpha_m\left(\mathbf{n}_{\MN}-\esp\{\mathbf{n}_{\MN}\}\right) \xrightarrow{\text{a.s.}} \mathbf{0}, \text{as } M_{\mathrm{o}}\to \infty,\label{eq:nmoap}\\
    &\frac{1}{M_{\mathrm{o}}}\sum_{m=1}^{M}\alpha_m\left(\mathbf{i}_{\MN}-\esp\{\mathbf{i}_{\MN}\}\right)\xrightarrow{\text{a.s.}} \mathbf{0}, \text{as } M_{\mathrm{o}}\to \infty.\label{eq:intmoap}
\end{align}
Given that the terms in $\mathbf{n}_\MN$ and $\mathbf{i}_\MN$ are independent, and knowing that $\bm{\omega}_m$ and $\mathbf{G}_{mm'}$ contain i.i.d. variables with zero mean, the inter-MN interference and noise are canceled out. On the other hand, for the desired signal, we also employ the Chebyshev's inequality as follows: 
\begin{align}
    &\frac{1}{M_{\mathrm{o}}}\sqrt{\rho_{\UT}}\left(\sum_{m=1}^{M}\alpha_m{\mathbf{V}}_m^H\mathbf{G}_{\UT m}^H\mathbf{W}\bm{\Lambda}_\UR^{1/2}\mathbf{x}_{\URUT}\right.\nonumber\\ 
    &\left.-\sum_{m=1}^{M}\alpha_m\esp\{{\mathbf{V}}_m^H\mathbf{G}_{\UT m}^H\mathbf{W}\bm{\Lambda}_\UR^{1/2}\}\mathbf{x}_{\URUT}\right) \xrightarrow{\text{a.s.}} \mathbf{0}, \text{as } M_{\mathrm{o}}\to \infty.\label{eq:dsmoa}
\end{align}
In this case, we need to certify that the expectation term does not goes to 0 as $M_{\mathrm{o}}\to \infty$. Thus, assuming perfect channel estimation and MRT precoding, \eqref{eq:dsmoa} is rewritten as
\begin{align}
    &\frac{1}{M_{\mathrm{o}}}\sqrt{\rho_{\UT}}\left(\sum_{m=1}^{M}\alpha_m{\mathbf{V}}_m^H\mathbf{G}_{\UT m}^H\mathbf{G}_{\UTUR}\bm{\Lambda}_\UR^{1/2}\mathbf{x}_{\URUT}\right.\nonumber\\ 
    &\left.-\sum_{m=1}^{M}\alpha_m\esp\{{\mathbf{V}}_m^H\mathbf{G}_{\UT m}^H\mathbf{G}_{\UTUR}\bm{\Lambda}_\UR^{1/2}\}\mathbf{x}_{\URUT}\right) \xrightarrow{\text{a.s.}} \mathbf{0}, \text{as } M_{\mathrm{o}}\to \infty.
\end{align}
Given that $\mathbf{G}_{\UT m}$ and $\mathbf{G}_{\UTUR}$ are independent, and modeled, respectively as~\eqref{eq:gtm} and \eqref{eq:channeldef}, it is straightforward to see that the expectation term does not converges to 0 as $M_{\mathrm{o}}\to \infty$, and hence \eqref{eq:asymptmo} is obtained.  


\section*{Appendix C \\Proof of Proposition~\ref{Prop:SE}}
By denoting the differential entropy as $h(\cdot)$, the mutual information between $\mathbf{x}_\URUT$ and $\mathbf{y}_\UR$ is defined as
\begin{align}\label{eq:ixyr}
    I(\mathbf{x}_\URUT;\mathbf{y}_\UR,\bm{\Theta}_\UR)=h(\mathbf{x}_\URUT|\bm{\Theta}_\UR)-h(\mathbf{x}_\URUT|\mathbf{y}_\UR,\bm{\Theta}_\UR).
\end{align}
Following~\cite[Appendix C]{9079911}, under Gaussian signaling, we obtain $h(\mathbf{x}_\URUT|\bm{\Theta}_\UR)=\log_2(\det(\pi e \mathbf{I}_\NR))$. Next, following~\cite[Appendix I]{1624653}, $h(\mathbf{x}_\URUT|\mathbf{y}_\UR,\bm{\Theta}_\UR)$ is upper bounded by
\begin{align}
    h(\mathbf{x}_\URUT|\mathbf{y}_\UR,\bm{\Theta}_\UR)&\leq\esp\Bigl\{\log_2\left(\det\left(\pi e \esp\bigl\{\epsilon_\URUT \epsilon_\URUT^H|\bm{\Theta}_\UR\bigr\}\right)\right)\Bigr\},\label{eq:hxy}
\end{align}
where $\epsilon_\URUT$ is the MMSE estimation error of $\mathbf{x}_\URUT$ given $\mathbf{y}_\UR$ and $\bm{\Theta}_\UR$. Accordingly, $\esp\{\epsilon_\URUT \epsilon_\URUT^H|\bm{\Theta}_\UR\}$ is computed as 
\begin{align}~\label{eq:xuruthat}
    \esp\{\epsilon_\URUT \epsilon_\URUT^H|\bm{\Theta}_\UR\}&=\mathbf{C}_{\mathbf{x}_\URUT\mathbf{x}_\URUT|\bm{\Theta}_\UR}-\mathbf{C}_{\mathbf{x}_\URUT\mathbf{y}_\UR|\bm{\Theta}_\UR}\mathbf{C}_{\mathbf{y}_\UR\mathbf{y}_\UR|\bm{\Theta}_\UR}^{-1}\mathbf{C}_{\mathbf{y}_\UR\mathbf{x}_\URUT|\bm{\Theta}_\UR}.
\end{align}

The covariance matrices in~\eqref{eq:xuruthat} are calculated as
\begin{align}
\mathbf{C}_{\mathbf{x}_\URUT\mathbf{x}_\URUT|\bm{\Theta}_\UR}&=\esp\Bigl\{\mathbf{x}_\URUT\mathbf{x}_\URUT^H|\bm{\Theta}_\UR\Bigr\}\!=\!\mathbf{I}_\NR,\label{eq:cxurxur}\\
    \mathbf{C}_{\mathbf{x}_\URUT\mathbf{y}_\UR|\bm{\Theta}_\UR}&=\esp\Bigl\{\mathbf{x}_\URUT\mathbf{y}_\UR^H|\bm{\Theta}_\UR\Bigr\}\!=\!\sqrt{\rho_\UT}\esp\Bigl\{\bm{\Lambda}_\UR^{1/2}\!\mathbf{A}_\UR^H|\bm{\Theta}_\UR\Bigr\},\label{eq:cxuryr}\\
    \mathbf{C}_{\mathbf{y}_\UR\mathbf{y}_\UR|\bm{\Theta}_\UR}&=\esp\Bigl\{\mathbf{y}_\UR\mathbf{y}_\UR^H|\bm{\Theta}_\UR\Bigr\}\!=\!\mathbf{I}_{N_\UR}\!\!+\!\rho_{\mathrm{J}}\esp\Bigl\{\Frj(\Frj)^H\!\Bigr\}\nonumber\\\!&+\!\rho_\UT\esp\Bigl\{\mathbf{A}_\UR\bm{\Lambda}_\UR\mathbf{A}_\UR^H|\bm{\Theta}_\UR\Bigr\},\label{eq:cyryr}\\
   \mathbf{C}_{\mathbf{y}_\UR\mathbf{x}_\URUT|\bm{\Theta}_\UR}&=\mathbf{C}_{\mathbf{x}_\URUT\mathbf{y}_\UR|\bm{\Theta}_\UR}^H=\sqrt{\rho_\UT}\esp\Bigl\{\mathbf{A}_\UR\bm{\Lambda}_\UR^{1/2}|\bm{\Theta}_\UR\Bigr\}\label{eq:yrxur}.
\end{align}
By plugging (\ref{eq:cxurxur})--(\ref{eq:yrxur}) into \eqref{eq:xuruthat}, and then replacing $h(\mathbf{x}_\URUT|\bm{\Theta}_\UR)$ and \eqref{eq:hxy} into \eqref{eq:ixyr}, the SE at UR can be computed as in \eqref{eq:SEr} by employing the matrix inversion lemma. Analogously, to obtain the SE at the CPU, the mutual information in \eqref{eq:tildexrt} must be computed between $\mathbf{x}_\URUT$ and ${\mathbf{z}}_\MN$. 
\bibliographystyle{IEEEtran}   
\bibliography{references}

\begin{thebibliography}{10}
\providecommand{\url}[1]{#1}
\csname url@samestyle\endcsname
\providecommand{\newblock}{\relax}
\providecommand{\bibinfo}[2]{#2}
\providecommand{\BIBentrySTDinterwordspacing}{\spaceskip=0pt\relax}
\providecommand{\BIBentryALTinterwordstretchfactor}{4}
\providecommand{\BIBentryALTinterwordspacing}{\spaceskip=\fontdimen2\font plus
\BIBentryALTinterwordstretchfactor\fontdimen3\font minus \fontdimen4\font\relax}
\providecommand{\BIBforeignlanguage}[2]{{%
\expandafter\ifx\csname l@#1\endcsname\relax
\typeout{** WARNING: IEEEtran.bst: No hyphenation pattern has been}%
\typeout{** loaded for the language `#1'. Using the pattern for}%
\typeout{** the default language instead.}%
\else
\language=\csname l@#1\endcsname
\fi
#2}}
\providecommand{\BIBdecl}{\relax}
\BIBdecl

\bibitem{paperglobecom}
I.~W.~G. da~Silva, Z.~Mobini, H.~Q. Ngo, and M.~Matthaiou, ``{CSI} acquisition in cell-free massive {MIMO} surveillance systems,'' \emph{arXiv preprint arXiv:2410.03501}, 2024.

\bibitem{10336902}
M.~Mitev, T.~M. Pham, A.~Chorti, A.~N. Barreto, and G.~Fettweis, ``Physical layer security—from theory to practice,'' \emph{IEEE BITS Inf. Theory Mag.}, vol.~3, no.~2, pp. 67--79, June 2023.

\bibitem{10684238}
M.~Mohammadi, Z.~Mobini, H.~Q.~Ngo, and M.~Matthaiou, ``Next-generation multiple access with cell-free massive {MIMO},'' \emph{Proc. IEEE}, vol. 112, no.~9, pp. 1372--1420, Sep. 2024.

\bibitem{Zahra:TIFS:2019}
Z.~Mobini, M.~Mohammadi, and C.~Tellambura, ``Wireless-powered full-duplex relay and friendly jamming for secure cooperative communications,'' \emph{IEEE Trans. Inf. Forensics Secur.}, vol.~14, no.~3, pp. 621--634, Mar. 2019.

\bibitem{5580113}
A.~Mukherjee and A.~L. Swindlehurst, ``Robust beamforming for security in {MIMO} wiretap channels with imperfect {CSI},'' \emph{IEEE Trans. Signal Process.}, vol.~59, no.~1, pp. 351--361, Jan. 2011.

\bibitem{8014299}
J.~Xu, L.~Duan, and R.~Zhang, ``Surveillance and intervention of infrastructure-free mobile communications: A new wireless security paradigm,'' \emph{IEEE Wireless Commun.}, vol.~24, no.~4, pp. 152--159, Aug. 2017.

\bibitem{8726325}
J.~Moon, S.~H. Lee, H.~Lee, and I.~Lee, ``Proactive eavesdropping with jamming and eavesdropping mode selection,'' \emph{IEEE Trans. Wireless Commun.}, vol.~18, no.~7, pp. 3726--3738, May 2019.

\bibitem{6772207}
A.~D. Wyner, ``The wire-tap channel,'' \emph{The Bell Syst. Tech. J.}, vol.~54, no.~8, pp. 1355--1387, Oct. 1975.

\bibitem{7321779}
J.~Xu, L.~Duan, and R.~Zhang, ``Proactive eavesdropping via jamming for rate maximization over {Rayleigh} fading channels,'' \emph{IEEE Wireless Commun. Let.}, vol.~5, no.~1, pp. 80--83, Feb. 2016.

\bibitem{7880684}
------, ``Proactive eavesdropping via cognitive jamming in fading channels,'' \emph{IEEE Trans. Wireless Commun.}, vol.~16, no.~5, pp. 2790--2806, May 2017.

\bibitem{ref:Zahra_proactive_UAV}
Z.~Mobini, B.~K. Chalise, M.~Mohammadi, H.~A. Suraweera, and Z.~Ding, ``Proactive eavesdropping using {UAV} systems with full-duplex ground terminals,'' in \emph{Proc. IEEE ICC}, May 2018, pp. 1--6.

\bibitem{10238835}
D.~Guo, L.~Tang, X.~Zhang, and Y.-C. Liang, ``Joint optimization of trajectory and jamming power for multiple {UAV}-aided proactive eavesdropping,'' \emph{IEEE Trans. Mobile Comput.}, vol.~23, no.~5, pp. 5770--5785, May 2024.

\bibitem{9681707}
Y.~Ge and P.~C. Ching, ``Energy efficiency for proactive eavesdropping in cooperative cognitive radio networks,'' \emph{IEEE Internet Things J.}, vol.~9, no.~15, pp. 13\,443--13\,457, Aug. 2022.

\bibitem{paperzonghan}
Z.~Wang, Z.~Mobini, H.~Q. Ngo, and M.~Matthaiou, ``Anti-malicious {ISAC} using proactive monitoring,'' \emph{arXiv preprint arXiv:2410.04408}, 2024.

\bibitem{9884994}
M.-M. Zhao, Y.~Cai, and R.~Zhang, ``Intelligent reflecting surface aided wireless information surveillance,'' \emph{IEEE Trans. Wireless Commun.}, vol.~22, no.~2, pp. 1219--1234, Feb. 2023.

\bibitem{10522668}
J.~Yao \emph{et~al.}, ``Proactive monitoring via jamming in fluid antenna systems,'' \emph{IEEE Commun. Lett.}, vol.~28, no.~7, pp. 1698--1702, July 2024.

\bibitem{9673103}
D.~Xu and H.~Zhu, ``Proactive eavesdropping for wireless information surveillance under suspicious communication quality-of-service constraint,'' \emph{IEEE Trans. Wireless Commun.}, vol.~21, no.~7, pp. 5220--5234, Jan. 2022.

\bibitem{10015068}
------, ``Proactive eavesdropping of physical layer security aided suspicious communications in fading channels,'' \emph{IEEE Trans. Inf. Forensics Secur.}, vol.~18, pp. 1111--1126, Jan. 2023.

\bibitem{10384408}
D.~Xu, ``Proactive eavesdropping of jamming-assisted suspicious communications in fading channels: A {Stackelberg} game approach,'' \emph{IEEE Trans. Commun.}, vol.~72, no.~5, pp. 2913--2928, May 2024.

\bibitem{9829333}
J.~Chen, L.~Tang, D.~Guo, Y.~Bai, L.~Yang, and Y.-C. Liang, ``Proactive eavesdropping in massive {MIMO-OFDM} systems via deep reinforcement learning,'' \emph{IEEE Trans. Veh. Technol.}, vol.~71, no.~11, pp. 12\,315--12\,320, Nov. 2022.

\bibitem{paperzahra}
Z.~Mobini, H.~Q. Ngo, M.~Matthaiou, and L.~Hanzo, ``Cell-free massive {MIMO} surveillance of multiple untrusted communication links,'' \emph{IEEE Internet Things J.}, vol.~11, no.~20, pp. 33\,010--33\,026, Oct. 2024.

\bibitem{9381240}
Z.~Cheng \emph{et~al.}, ``Covert surveillance via proactive eavesdropping under channel uncertainty,'' \emph{IEEE Trans. Commun.}, vol.~69, no.~6, pp. 4024--4037, Mar. 2021.

\bibitem{10032289}
C.~Zhang, X.~Miao, Y.~Huang, L.~Yang, and L.~Tang, ``Performance of multi-antenna proactive eavesdropping in {5G} uplink systems,'' \emph{IEEE Trans. Wireless Commun.}, vol.~22, no.~9, pp. 6078--6091, Sep. 2023.

\bibitem{9709525}
G.~Hu, F.~Zhu, J.~Si, Y.~Cai, and N.~Al-Dhahir, ``Proactive eavesdropping with jamming power allocation in training-based suspicious communications,'' \emph{IEEE Signal Process. Lett.}, vol.~29, pp. 667--671, Feb. 2022.

\bibitem{7827017}
H.~Q. Ngo, A.~Ashikhmin, H.~Yang, E.~G. Larsson, and T.~L. Marzetta, ``Cell-free massive {MIMO} versus small cells,'' \emph{IEEE Trans. Wireless Commun.}, vol.~16, no.~3, pp. 1834--1850, Jan. 2017.

\bibitem{10684260}
M.~Mohammadi, Z.~Mobini, H.~Q.~Ngo, and M.~Matthaiou, ``Ten years of research advances in full-duplex massive {MIMO},'' \emph{IEEE Trans. Commun.}, vol.~73, no.~3, pp. 1756--1786, Mar. 2025.

\bibitem{WU201926}
J.~Wu, X.-Y. Chen, H.~Zhang, L.-D. Xiong, H.~Lei, and S.-H. Deng, ``Hyperparameter optimization for machine learning models based on {Bayesian} optimization,'' \emph{J. Electron. Sci. Technol.}, vol.~17, no.~1, pp. 26--40, Dec. 2019.

\bibitem{pmlr-v54-klein17a}
\BIBentryALTinterwordspacing
A.~Klein, S.~Falkner, S.~Bartels, P.~Hennig, and F.~Hutter, ``Fast {Bayesian} optimization of machine learning hyperparameters on large datasets,'' in \emph{Proc. AISTATS}, vol.~54, Apr. 2017, pp. 528--536. [Online]. Available: \url{https://proceedings.mlr.press/v54/klein17a.html}
\BIBentrySTDinterwordspacing

\bibitem{6736537}
H.~Q. Ngo, E.~G. Larsson, and T.~L. Marzetta, ``Massive {MU-MIMO} downlink {TDD} systems with linear precoding and downlink pilots,'' in \emph{Proc. IEEE ALLERTON}, Oct. 2013, pp. 293--298.

\bibitem{8360138}
T.~M. Hoang, H.~Q. Ngo, T.~Q. Duong, H.~D. Tuan, and A.~Marshall, ``Cell-free massive {MIMO} networks: Optimal power control against active eavesdropping,'' \emph{IEEE Trans. Commun.}, vol.~66, no.~10, pp. 4724--4737, Oct. 2018.

\bibitem{9392380}
J.~A.~C. Sutton, H.~Q. Ngo, and M.~Matthaiou, ``Hardening the channels by precoder design in massive {MIMO} with multiple-antenna users,'' \emph{IEEE Trans. Veh. Technol.}, vol.~70, no.~5, pp. 4541--4556, May 2021.

\bibitem{8845768}
E.~Björnson and L.~Sanguinetti, ``Making cell-free massive {MIMO} competitive with {MMSE} processing and centralized implementation,'' \emph{IEEE Trans. Wireless Commun.}, vol.~19, no.~1, pp. 77--90, Jan. 2020.

\bibitem{tutBayes}
P.~I. Frazier, ``A tutorial on {Bayesian} optimization,'' \emph{arXiv preprint arXiv:1807.02811}, 2018.

\bibitem{snoek2012practical}
J.~Snoek, H.~Larochelle, and R.~P. Adams, ``Practical {Bayesian} optimization of machine learning algorithms,'' \emph{Proc. Adv. Neural Inf. Process. Syst.}, vol.~25, 2012.

\bibitem{gpdef}
J.~Quinonero-Candela, C.~E. Rasmussen, and C.~K. Williams, ``Approximation methods for {Gaussian} process regression,'' \emph{MIT Press}, 2007.

\bibitem{7352306}
B.~Shahriari, K.~Swersky, Z.~Wang, R.~P. Adams, and N.~de~Freitas, ``Taking the human out of the loop: A review of {Bayesian} optimization,'' \emph{Proc. IEEE}, vol. 104, no.~1, pp. 148--175, Jan. 2016.

\bibitem{gpml}
\BIBentryALTinterwordspacing
C.~E. Rasmussen and C.~K.~I. Williams, \emph{{{Gaussian} Processes for Machine Learning}}.\hskip 1em plus 0.5em minus 0.4em\relax The MIT Press, Nov. 2005. [Online]. Available: \url{https://doi.org/10.7551/mitpress/3206.001.0001}
\BIBentrySTDinterwordspacing

\bibitem{bookBayes}
R.~Garnett, \emph{{Bayesian} optimization}.\hskip 1em plus 0.5em minus 0.4em\relax Cambridge University Press, 2023.

\bibitem{brochu}
M.~Hoffman, E.~Brochu, and N.~De~Freitas, ``Portfolio allocation for {Bayesian} optimization.'' in \emph{Proc. UAI}, 2011, pp. 327--336.

\bibitem{NIPS2005_4491777b}
E.~Snelson and Z.~Ghahramani, ``Sparse {G}aussian processes using pseudo-inputs,'' in \emph{Proc. Adv. Neural Inf. Process. Syst.}, vol.~18, 2005.

\bibitem{JMLR:v11:lazaro-gredilla10a}
M.~L{\'a}zaro-Gredilla, J.~Quinonero-Candela, C.~E. Rasmussen, and A.~R. Figueiras-Vidal, ``Sparse spectrum {Gaussian} process regression,'' \emph{J. Mach. Learn. Res.}, vol.~11, no.~63, pp. 1865--1881, Oct. 2010.

\bibitem{10258345}
M.~Mohammadi, Z.~Mobini, D.~Galappaththige, and C.~Tellambura, ``A comprehensive survey on full-duplex communication: Current solutions, future trends, and open issues,'' \emph{IEEE Commun. Surv. Tutor.}, vol.~25, no.~4, pp. 2190--2244, Fourthquarter 2023.

\bibitem{9079911}
T.~C. Mai, H.~Q. Ngo, and T.~Q. Duong, ``Downlink spectral efficiency of cell-free massive {MIMO} systems with multi-antenna users,'' \emph{IEEE Trans. Commun.}, vol.~68, no.~8, pp. 4803--4815, Apr. 2020.

\bibitem{1624653}
T.~Yoo and A.~Goldsmith, ``Capacity and power allocation for fading {MIMO} channels with channel estimation error,'' \emph{IEEE Trans. Inf. Theory}, vol.~52, no.~5, pp. 2203--2214, Apr. 2006.

\end{thebibliography}

\end{document}